\documentclass[a4paper,10pt]{llncs}

\usepackage{graphicx}
\usepackage{amsmath,amssymb}%,stmaryrd}
\usepackage{braket}
\usepackage{algorithm}
\usepackage{algorithmic}
\usepackage{color}

\usepackage{url}
\urldef{\mailsa}\path|armknecht@uni-mannheim.de|
\urldef{\mailsb}\path|tommaso.gagliardoni@cased.de|
\urldef{\mailsc}\path|katzenbeisser@seceng.informatik.tu-darmstadt.de|
\urldef{\mailsd}\path|a.peter@utwente.nl|
\newcommand{\keywords}[1]{\par\addvspace\baselineskip
\noindent\keywordname\enspace\ignorespaces#1}

%------------------------------------------------

% new commands
\newcommand{\group}{G}
\newcommand{\subgroup}{H}
\newcommand{\nsgroup}{\subgroup}
\newcommand{\neutr}{1}
\newcommand{\op}{\cdot}
\newcommand{\inv}[1]{{#1}^{-1}}

\newcommand{\adv}{\mathcal A}
\newcommand{\bdv}{\mathcal B}
\newcommand{\negl}{{\tt negl}}
\newcommand{\prob}[2]{\underset{#1}{\rm Pr}\left[#2\right]}

\newcommand{\compind}{\overset{c}{=}}
\newcommand{\fromdist}{\longleftarrow}
\newcommand{\fromunif}{\overset{U}{\longleftarrow}}

\newcommand{\secparam}{\lambda}
\newcommand{\advantage}[2]{{\rm Adv}_{#1}^{#2}}
\newcommand{\indcpa}{{\rm IND-CPA}}
\newcommand{\encsch}{\mathcal E}
\newcommand{\keygen}{{\sf KeyGen}}
\newcommand{\enc}{{\sf Enc}}
\newcommand{\dec}{{\sf Dec}}
\newcommand{\pk}{{\sf pk}}
\newcommand{\sk}{{\sf sk}}
\newcommand{\ptexts}{\mathcal P}
\newcommand{\ctexts}{\widehat{\mathcal C}}
\newcommand{\allencs}{\mathcal C}
\newcommand{\Rnd}{{\sf Rnd}}
\newcommand{\indexp}[2]{{\bf Exp}^{{\rm ind}\text{-}#1}_{#2}}
\newcommand{\cpa}{{\rm cpa}}
\newcommand{\smp}{{\rm smp}}
\newcommand{\smps}{{\rm smp}^*}
\newcommand{\cipher}[1]{\allencs_{#1}}
\newcommand{\SMP}{{\rm SMP}}

\newcommand{\gen}{{\sf Gen}}

\newcommand{\nepsilon}{{\tt \varepsilon}}

\newcommand{\qadv}{\mathcal A_Q}
\newcommand{\genby}[1]{\left\langle #1\right\rangle}
\newcommand{\poly}{{\tt poly}}

\newcommand{\dist}{\mathcal D}
\newcommand{\nnegl}{\Gamma}

\newcommand{\GHE}{group homomorphic encryption}

\newcommand{\eval}{{\sf Eval}}
\newcommand{\grp}[1]{\group_{0,1}}
\newcommand{\cover}{\delta} %Command for \delta-covering subgroup
\newcommand{\coversuccess}{\sigma} %Command for success probability of fining a generating set for a \delta-covering subgroup
\newcommand{\coversamples}{N} %Number of samples required for sampling generating set of a \delta-covering subgroup
\newcommand{\Sample}{\mathsf{Sample}}

\newcommand{\ord}{\mathrm{ord}}

% redefined commands
\renewcommand{\qed}{{ }\hfill$\Box$}
\renewcommand{\subset}{\subseteq}

\renewcommand{\output}{\fromdist}

\renewcommand{\exp}[2]{{\bf Exp}^{#1}_{#2}}
\renewcommand{\phi}{\varphi}

% new commands by Tommaso Gagliardoni

\newcommand{\fromdistr}[1]{\overset{#1}{\longleftarrow}}
\newcommand{\tommaso}[1]{ }

\newcommand{\loosebound}{\left\lceil \frac{\log(1-\coversuccess) - \log(k)}{\log(\cover)}\right\rceil}

% new theorem environments
\newtheorem{fact}{Fact}
\newtheorem{observation}{Observation}

% new environments

\newenvironment{smpexp}{\vspace{3mm}\newline{\rm Experiment} $\exp{\smp}{\adv,\gen}(\secparam)${\rm :}
\begin{enumerate}}{\end{enumerate}\par\noindent\ignorespacesafterend}

% main document
\begin{document}

\mainmatter

%\title{Limits of Homomorphic Encryption \\ in the Classical and Quantum World}
\title{General Impossibility of Group Homomorphic Encryption in the Quantum World}
%\title{General Impossibility of Group Homomorphic Encryption in the Quantum World}

%%\titlerunning{Limits of Homomorphic Encryption in the Classical and Quantum World}
%\titlerunning{Impossibility of Group Homomorphic Encryption in the Quantum World}
%%
\author{Frederik Armknecht\inst{1}
\and Tommaso Gagliardoni\inst{2,}\thanks{Supported by the German Federal Ministry of Education and Research (BMBF) within the EC-SPRIDE project.}
\and \\ Stefan Katzenbeisser\inst{2}
\and Andreas Peter\inst{3,}\thanks{Supported by the THeCS project as part of the Dutch national program COMMIT.}}
%%
%%\institute{Theoretical Computer Science and IT Security Group \\
%%Universit\"at Mannheim, Germany\\
%%\mailsa
%%\and Cryptoplexity \\
%%Technische Universit\"at Darmstadt and CASED, Germany \\
%%\mailsb
%%\and Security Engineering Group \\
%%Technische Universit\"at Darmstadt and CASED, Germany \\
%%\mailsc
%%\and Distributed and Embedded Security Group \\
%%University of Twente, The Netherlands \\
%%\mailsd}
%
%%% short institutes
\institute{Universit\"at Mannheim, Germany\\
\mailsa
\and Technische Universit\"at Darmstadt and CASED, Germany \\
\mailsb \\
\mailsc
\and University of Twente, The Netherlands \\
\mailsd}
%%
%\authorrunning{F.~Armknecht, T.~Gagliardoni, S.~Katzenbeisser, and A.~Peter}

\maketitle

% --- abstract
\begin{abstract}
Group homomorphic encryption represents one of the most important building blocks in modern cryptography. It forms the basis of widely-used, more sophisticated primitives, such as CCA2-secure encryption or secure multiparty computation. Unfortunately, recent advances in quantum computation show that many of the existing schemes completely break down once quantum computers reach maturity (mainly due to Shor's algorithm). This leads to the challenge of constructing quantum-resistant group homomorphic cryptosystems.

In this work, we prove the \emph{general} impossibility of (abelian) group homomorphic encryption in the presence of quantum adversaries, when assuming the \indcpa\ security notion as the minimal security requirement. To this end, we prove a new result on the probability of sampling generating sets of finite (sub-)groups if sampling is done with respect to an arbitrary, unknown distribution.  Finally, we provide a sufficient condition on homomorphic encryption schemes for our quantum attack to work and discuss its satisfiability in non-group homomorphic cases. The impact of our results on recent fully homomorphic encryption schemes poses itself as an open question.
\keywords{Public-Key Cryptography, Homomorphic Encryption, Semantic Security, Quantum Algorithms, Sampling Group Generators}
\end{abstract}

%\thispagestyle{empty}
%\AddToShipoutPicture*{\footnotesize
%\raisebox{1.5cm}{\hspace{1.8cm}\parbox{174mm}{
%\copyright\ IACR 2013. This article is the final version submitted by the author(s) to the IACR and to Springer-Verlag on September 9, 2013. The version published by Springer-Verlag is available at http://dx.doi.org/10.1007/978-3-642-42033-7\textunderscore 9.
%}}}

% ---

% --- Introduction
\section{Introduction}\label{sec:introduction}
Since the introduction of public-key cryptography by Diffie and
Hellman~\cite{diffie76pkcrypto} in 1976, researchers strived to
construct encryption schemes that are \emph{group homomorphic}. This
property can be characterized by requiring the encryption scheme to
have a homomorphic decryption procedure, while the plaintext and
ciphertext spaces form groups. Ever since, the topic of homomorphic
encryption is of central importance in cryptography. The recent
advances in fully homomorphic encryption
(FHE)~\cite{brakerski12fhe,gentry09fhe,gentry12aes} constitute just
one example of this trend.
In practice, \emph{group} homomorphic encryption schemes lie at the
heart of several important applications, such as electronic
voting~\cite{CraGS97}, private information retrieval~\cite{KusO97},
%outsourcing computations~\cite{GennaroGP10},
or multiparty computation~\cite{CraDN01} to name just a few.
Moreover, the group homomorphic property comes quite naturally, as
witnessed by a number of encryption schemes, for example
RSA~\cite{rivest78rsa}, ElGamal~\cite{elgamal85elgamal},
Goldwasser-Micali~\cite{goldwasser84probenc}, where the homomorphic property was not a design goal, but rather arose ``by~chance''.

So far, these cryptosystems were all analyzed in the classical model
of computation. However, it is reasonable to assume that the \emph{quantum}
model of computation will become more realistic in the
future. Unfortunately, in this model all aforementioned cryptosystems
are insecure due to Shor's algorithm~\cite{shor94quantum}, which
allows to efficiently solve the discrete logarithm problem and to
factor large integers. That is, until today nobody has been able to come up with a group homomorphic encryption scheme that can
withstand quantum attackers. 

It seems that such a scheme would require other design approaches. For
instance, when considering ElGamal-like encryption schemes, simply
replacing the underlying computational hardness assumption by a
supposedly quantum-resistant one, say code-based, is not
enough~\cite{armknecht10soap}. 
In fact although there is a substantial number
of classical cryptographic primitives that can be proven secure
against quantum attackers, e.g.~\cite{hallgren11quantum}, we still
know little about what classical primitives can be realized in the
quantum world and what not. Indeed this applies to the case of group homomorphic encryption schemes as well: so far it was even undecided whether \emph{group
homomorphic encryption can exist at all in the quantum world}. In other words, does the absence of a quantum secure group homomorphic encryption scheme so far imply that the right approach has not been found yet (but may be in the future) or are there universal reasons that prevent the existence of such schemes?

\subsection{Our Contributions}
\subsubsection{Basic Impossibility Result.}
%\noindent\emph{Basic Impossibility Result.} 
The central contribution of this work is to give a \textit{negative}
answer to the above question: 
\begin{center}
\it
It is impossible to construct secure group homomorphic encryption in the quantum
world, if the plaintext and ciphertext spaces form abelian groups.
\end{center}
%\textit{the
%general impossibility of }.
More precisely, we prove that any such scheme\footnote{Although we postulate that our result is extendible to arbitrary solvable groups, we focus on the abelian case, since it is the most important one for reasons of practicability in real-world applications.} cannot meet the minimial security notion of \indcpa\ security in the presence of quantum adversaries. Observe that this result not only re-confirms the insecurity of \emph{existing} schemes, but shows that \emph{all} group homomorphic encryption schemes (including all yet to come schemes) are inevitably insecure in the quantum world.
%\vspace{-0.4cm}
\subsubsection{Quantum Attack.} In order to prove this impossibility, we start by exhibiting the fact that the \indcpa\ security of any group homomorphic encryption scheme can be reduced to an abstract \emph{Subgroup Membership Problem} (\SMP), introduced by Cramer and Shoup~\cite{cramer02uhp}, which is much easier to analyze. Roughly speaking, this problem states that given a group $G$ with subgroup $H$ and a randomly sampled (according to some arbitrary distribution $\dist$) element $g\in G$, decide whether $g\in H$ or not.
%first general necessary condition for a group homomorphic encryption scheme to be \indcpa\ secure. Roughly speaking, it states that the \indcpa\ security of such a scheme implies the hardness of distinguishing a random encryption from an encryption of the neutral element of the plaintext group. In other words, the notion of \indcpa\ security is reducible to a very simple formulation of a \emph{Subgroup Membership Problem} (\SMP), which is much easier to analyze. Previous papers~\cite{armknecht10soap,fellows93fks,gentry09fhe,gjosteen05homsmp} have only been able to prove weak versions of this seemingly folklore result by restricting the attention to specific classes of schemes. Our work is the first that applies to \emph{any} \indcpa\ secure group homomorphic encryption and is hence the most general extension of previous results in this vein.
This reduction to the \SMP\ tells us that in order to break the \indcpa\ security of a given group homomorphic encryption scheme in the quantum world, it is sufficient to give a quantum algorithm that breaks the \SMP.
Now, the basic idea for breaking the \SMP\ for groups $(G,H)$ is to use Watrous' variant~\cite{watrous01qalggrp} of the famous group order-finding quantum algorithm, which will effectively decide membership.
%This algorithm needs a set of generators of $H$ and can then be used to check whether the group order of $H$ is the same as the order of the group generated by $H$ and $g$. This is the case if and only if $g$ lies in $H$.
\vspace{-0.2cm}
\subsubsection{Sampling Generators in Finite Groups.}
Unfortunately, this algorithm only works when given a set of generators of $H$ which we commonly do not have. Hence we restrict to the generic case that an attacker has only access to an efficient \emph{sampling algorithm} for $H$ that samples according to some distribution $\dist$. We distinguish between the following two cases:
\begin{itemize}
\item  \textbf{Uniform Distribution.}
If $\dist$ is uniform, Erd\"os and R\'enyi~\cite{erdos65sampling} show that sampling polynomially many times from $H$ will give a generating set with high probability---a result that has been improved by Pak and Bratus~\cite{pak99sampling}: If $k=\lceil\log_2(|H|)\rceil$, then $k+4$ samples are enough to get a set of generators with probability $\geq 3/4$.
%In fact, we give a further improvement on this result by showing that $k+1$ samples are already sufficient to achieve the same probability bound. 
After obtaining a generating set for $H$, we use Watrous' quantum algorithm to decide membership in $H$, and hence efficiently break the \SMP\ for $(G,H)$.
\item \textbf{Arbitrary/Unknown Distribution.}
In general, the distribution $\dist$ does not have to be uniform, but can be arbitrary, or completely unknown. Interestingly, we prove that, even then, breaking the SMP is possible with (almost) linearly many samples only.
Observe that as we do not make any restrictions on the sampling algorithm, we  cannot exclude seemingly exotic cases where regions of $\nsgroup$ are hardly (or never) reached by the sampling algorithm. Thus, the best we can aim for is to find a generating set for a subgroup $\nsgroup^*$ of $\nsgroup$ such that the probability that a random sample (with respect to $\dist$) does fall into $\nsgroup^*$ is above an arbitrarily chosen threshold $\delta$. 
We call such subgroups to be {\em $\delta$-covering}.  It turns out that having a generating set for such a subgroup is enough to break the \SMP\ for $(G,H)$.
The main challenge, however, is to find a generating set for a $\delta$-covering subgroup. To this end, we prove a new result on the probability of sampling generating sets of finite (sub-)groups with unknown sampling distribution. More precisely, we show that for any chosen probability threshold $\delta^*$, there exists a value $N$, which grows at most logarithmically in $k$ and does not depend on $\dist$, such that $N\cdot k+1$ samples yield a generating set for a $\delta$-covering subgroup with probability at least $\delta^*$. This result represents one of the main technical contribution of our work. We believe that it is also applicable in other research areas, e.g., computational group theory, and hence might be of independent interest.
%depends on $\delta$ and $\delta^*$ only% such that $N\cdot k+1$ samples yield a generating set for a $\delta$-covering subgroup with probability at least $\delta^*$. This result represents one of the main technical contribution of our work. We believe that it is also applicable in other research areas, e.g., computational group theory,  and hence might be of independent interest.
\end{itemize}
\subsubsection{Possible Extensions to Fully Homomorphic Encryption Schemes.}
Finally, we provide a general sufficient condition on a homomorphic encryption scheme for our quantum attack to work and discuss the applicability in FHE schemes. The decision of whether our attack breaks any of the existing FHE schemes~\cite{brakerski12fhe,gentry09fhe,gentry12aes} proves itself to be a highly non-trivial task and lies outside the scope of this paper. We leave it as interesting future work.
\vspace{-0.1cm}
\subsection{Related Work}
There are many papers dealing with the construction of \indcpa\ secure group homomorphic encryption schemes~\cite{paillier99grphom,gjosteen05homsmp,damgaard08grphom,armknecht10soap,peter12grphom}. Some of these works attempt\-ed to build such schemes using post-quantum primitives~\cite{armknecht11coding}, which did not succeed (for a good reason as our results show). Also, for a restricted class of group homomorphic schemes, \cite{armknecht10soap} shows the impossibility of using linear codes as the ciphertext group. Furthermore, we mention the impossibility (even in the classical world) of \emph{algebraically homomorphic} encryption schemes~\cite{boneh96blackbox}, which are deterministic encryption schemes and thus do not fall into the class \indcpa\ secure cryptosystems.

In the quantum world, there is an even more efficient algorithm for breaking such algebraically homomorphic schemes~\cite{dam06quantum}. In this vein, there are many variants of Shor's algorithm~\cite{shor94quantum} that are being used to solve different computational problems~\cite{mosca12quantum,watrous01qalggrp}, leading to the breakdown of certain cryptosystems. On the other hand, there are several papers dealing with the analysis of classical primitives in the presence of quantum adversaries~\cite{hallgren08zero,hallgren11quantum}. However, none of these works show a general impossibility of group homomorphic cryptosystems.

With respect to the sampling from finite groups, there are many papers that are concerned with the improvement of probability bounds on finding generating sets when sampling uniformly at random~\cite{erdos65sampling,babai91sampling,pak99sampling}. Similar strong results for the arbitrary sampling from finite groups are not known.

Finally, we mention the recent advances in fully homomorphic encryption (FHE)~\cite{brakerski12fhe,gentry09fhe,gentry12aes}. These schemes are not classified as being group homomorphic, as they follow a different design approach. Rather than having a group homomorphic decryption algorithm, the decryption is only guaranteed to run correctly for polynomially many evaluations of the group operation.
%Rather than staying in a group structure, these schemes add a certain amount of noise which leads to the eventual exit from the group. 
Interestingly enough, our results show that since current FHE schemes are based on post-quantum hardness assumptions, they had to follow a different approach than the group homomorphic one.
\vspace{-0.1cm}
\subsection{Outline}
We recall standard notation in Section~\ref{sec:notation} and show some basic observations on group homomorphic encryption and the Subgroup Membership Problem (\SMP) in Section~\ref{sec:grphom}. Section~\ref{sec:quantum} covers the main Theorem, showing the impossibility of group homomorphic encryption in the quantum world, thereby giving our new insights in the sampling of group generators. We discuss non-group homomorphic encryption, such as somewhat and (leveled) fully homomorphic encryption in Section~\ref{sec:discussion}.

% ---

% --- Preliminaries
%\section{Preliminaries}\label{sec:preliminaries}
%\todo{Mention that any ``standard'' PPT algorithm can be simulated by a quantum PPT algorithm! Here or in Intro?}
\section{Notation}\label{sec:notation}
Throughout the paper, we use some standard notation that we briefly want to recall. We write $x\fromdist X$ if $X$ is a random variable or distribution and $x$ is to be chosen randomly from $X$ according to its distribution. In the case where $X$ is solely a set, $x\fromunif X$ denotes that $x$ is chosen uniformly at random from $X$. If we sample an element $x$ from $X$ by using a specific distribution $\dist$, we write $x\fromdistr{\dist}X$ (or $x\output X$ when there is no doubt about the distribution $\dist$). For a distribution $\dist$ on $X$, the term $\dist(x)$ for $x\in X$ expresses the probability with which $x$ is sampled according to $\dist$, i.e., the probability mass function at $x\in X$.

For an algorithm $\adv$ we write $x\output\adv(y)$ if $\adv$ outputs $x$ on fixed input $y$ according to $\adv$'s distribution. Sometimes, we need to specify the randomness of a probabilistic algorithm $\adv$ explicitly. To this end, we interpret $\adv$ in the usual way as a deterministic algorithm $\adv(y;r)$, which has access to values $r\output\Rnd$ that are randomly chosen from some randomness space $\Rnd$. Moreover, two distribution ensembles $X=\{X_\secparam\}_{\secparam\in\mathbb N}$ and $Y=\{Y_\secparam\}_{\secparam\in\mathbb N}$ taking values in a finite set $S_\secparam$ (indexed by a parameter $\secparam$) are said to be \emph{computationally indistinguishable}, if for all probabilistic polynomial time (PPT) algorithms $\adv$ there exists a negligible function $\negl$ such that
\[\advantage{\adv}{X,Y}(\secparam):=\left|\prob{x\output X_\secparam}{\adv(x)=1}-\prob{y\output Y_\secparam}{\adv(y)=1}\right|\leq\negl(\secparam).\]
%We sometimes call such an algorithm $\adv$ a \emph{distinguisher} and say that the distribution ensemble $X$ is \emph{negligibly close} to $Y$.
We denote this by $X\compind Y$. 
%\tommaso{The following part about variational distance I think is not used anymore in any part of the paper, right?} Furthermore, we define the {\em variational distance} as:
%$$
%\left|X-Y\right|(\secparam) := \frac{1}{2} \sum_{z\in S_\secparam} \left| \prob{x \output X_\lambda}{x = z} - \prob{y \output Y_\lambda}{y = z} \right| .
%$$
%We say that $X$ and $Y$ are {\em statistically indistinguishable} iff $|X-Y|$ is negligible in $\secparam$. It is clear that if $X$ and $Y$ are statistically indistinguishable, then in particular they are computationally indistinguishable.

For a group $\group$, we denote the neutral element by $\neutr$, and denote the binary operation on $\group$ by ``$\op$'', i.e., $\group$ is written in \emph{multiplicative notation}. We recall that a subgroup $\subgroup$ of a group $\group$ is said to be \emph{normal} if $z\op h\op \inv{z}\in\subgroup$ for all $z\in\group, h\in\subgroup$. In particular, this means that if $\group$ is an abelian group, then every subgroup $\subgroup$ is normal.

In general, we will consider sequences of abelian groups $(\group_\secparam)_\secparam$ indexed by a parameter $\secparam$, where any element of every $\group_\secparam$ admits a representation of size at most polynomial in $\secparam$. We might assume, without loss of generality, that the choice of this polynomial is the identity, and in particular that every $\group_\secparam$ has order upper bounded by $2^\secparam$. We will just write $\group$ instead of $\group_\secparam$ for any fixed choice of $\secparam$.

By a \emph{description} of a finite group $\group$ we mean an efficient (i.e., PPT in $\secparam$) sampling algorithm (where sampling is denoted by $x\output\group$), the neutral element $\neutr$, an efficient algorithm for performing the group operation on $\group$, and one for the inversion of group elements. Notice that the output distribution of the sampling algorithm does not have to be necessarily uniform. We abuse notation and write $\group$ both for the description and for the group itself. Furthermore, for elements $x_1,\ldots,x_k\in\group$, we write $\genby{x_1,\ldots,x_k}$ for the subgroup generated by $x_1,\ldots,x_k$.

\section{Group Homomorphic Encryption}\label{sec:grphom}
We recall the notion of public-key \emph{group homomorphic} encryption, which roughly can be described as usual public-key encryption where the decryption algorithm is a group homomorphism. 
\begin{definition}[Group Homomorphic Encryption~\cite{armknecht10soap,hemenway12homcca}]\label{def:grouphom}
A public key encryption scheme $\encsch=(\keygen,\enc,\dec)$ is called \emph{group homomorphic}, if for every output $(\pk,\sk)$ of $\keygen(\secparam)$, the plaintext space $\ptexts$ and the ciphertext space $\ctexts$ are non-trivial groups such that
\begin{itemize}
\item the \emph{set of all encryptions} $\allencs:=\{\enc_\pk(m;r)\mid m\in\ptexts, r\in\Rnd\}$ is a non-trivial subgroup of $\ctexts$
\item the decryption $\dec_\sk$ is a group homomorphism on $\allencs$, i.e.
\[\dec_\sk(c\op c')=\dec_\sk(c)\op\dec_\sk(c'),\text{ for all }c,c'\in\allencs.\footnote{Note that the decryption might output an error $\bot$ on inputs in $\ctexts\setminus\allencs$. Therefore, requiring it to be homomorphic on $\allencs$ is as general as possible since we do not give any restriction on its behaviour outside of $\allencs$.}\]
\end{itemize}
\end{definition}
%Notice that the scheme does {\em not} include a membership testing algorithm (i.e., an algorithm to test whether a group element is a valid encryption or not). The standard security notion for such homomorphic encryption schemes is that of \emph{indistinguishability under chosen-plaintext attack}, denoted by \indcpa~\cite{armknecht10soap}. Informally, this notion states whenever an adversary picks two plaintext messages of his choosing and gets to see an encryption of either of them, it should be computationally infeasible for him to decide which of the two messages was encrypted. Formally, we consider the following experiment for a given security parameter $\secparam$, group homomorphic encryption scheme $\encsch=(\keygen,\enc,\dec)$, and PPT adversary $\adv=(\adv_1,\adv_2)$:
%\begin{cpaexp}
%\item $(\pk,\sk)\output\keygen(\secparam)$
%\item $(m_0,m_1,s)\output\adv_1(\pk)$ where $m_0,m_1\in\ptexts$ and $s$ is a state of $\adv_1$
%\item Choose $b\fromunif\{0,1\}$ and compute $c\output\enc_\pk(m_b)$
%\item $d\output\adv_2(m_0,m_1,s,c)$ where $d\in\{0,1\}$
%\item The output of the experiment is defined to be 1 if $d=b$ and 0 otherwise.
%\end{cpaexp}
Notice that the scheme does {\em not} include a membership testing algorithm (i.e., an algorithm to test whether a group element is a valid encryption or not). The standard security notion for such homomorphic encryption schemes is that of \emph{indistinguishability under chosen-plaintext attack}, denoted by \indcpa~\cite{armknecht10soap}. Informally, this notion states whenever an adversary picks two plaintext messages of his choosing and gets to see an encryption of either of them, it should be computationally infeasible for him to decide which of the two messages was encrypted. Formally, for a given security parameter $\secparam$, group homomorphic encryption scheme $\encsch=(\keygen,\enc,\dec)$, and PPT adversary $\adv$, we consider the experiment $\indexp{\cpa}{\adv,\keygen}(\secparam)$, where $\adv$ chooses two different plaintexts $m_0,m_1$ and is then provided an encryption $\enc_\pk(m_b)$ for a randomly chosen bit $b$ and a public key $\pk$ output by $\keygen(\secparam)$. The experiment succeeds (outputs $1$) if $b$ is guessed correctly.
%\begin{cpaexp}
%\item $(\pk,\sk)\output\keygen(\secparam)$
%\item $(m_0,m_1,s)\output\adv_1(\pk)$ where $m_0,m_1\in\ptexts$ and $s$ is a state of $\adv_1$
%\item Choose $b\fromunif\{0,1\}$ and compute $c\output\enc_\pk(m_b)$
%\item $d\output\adv_2(m_0,m_1,s,c)$ where $d\in\{0,1\}$
%\item The output of the experiment is defined to be 1 if $d=b$ and 0 otherwise.
%\end{cpaexp}
We say that $\encsch$ is \indcpa\ \emph{secure} if the advantage
\[\left|\prob{}{\indexp{\cpa}{\adv,\keygen}(\secparam)=1}-\frac{1}{2}\right|\text{ is negligible for all PPT adversaries $\adv$.}\]
Moreover, we recall a fact showing the strong group-theoretic structure of the set of encryptions of $\neutr\in\ptexts$ for \emph{any} group-homomorphic encryption scheme. For this, we introduce the \emph{set of all encryptions of $m\in\ptexts$}
\[\cipher{m}:=\{c\in\allencs\mid\dec_\sk(c)=m\}.\]
\begin{fact}[Basic Properties~\cite{armknecht10soap}]\label{lemma:c0}
Let $\encsch=(\keygen,\enc,\dec)$ be an arbitrary group homomorphic encryption scheme. It holds that 
\begin{enumerate}
\item $\cipher{m}=\enc_\pk(m;r)\op\cipher{\neutr}$ for all $m\in\ptexts$ and all $r\in\Rnd$, and
\item $\cipher{\neutr}$ is a proper \emph{normal} subgroup of $\allencs$ such that $|\cipher{\neutr}|=|\cipher{m}|$ for all $m\in\ptexts$.
\end{enumerate}
It follows that the set $\{\enc_\pk(m;r)\mid m\in\ptexts\}$ for a fixed $r$ is a system of representatives of $\allencs/\cipher{\neutr}$.
\end{fact}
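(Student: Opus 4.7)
The plan is to view the decryption map $\dec_\sk$ restricted to $\allencs$ as a group homomorphism from $\allencs$ onto $\ptexts$, and then to read off both parts of the claim from standard kernel/coset theory. The key observation to set up is that, by Definition~\ref{def:grouphom}, $\allencs$ is a subgroup of $\ctexts$ and $\dec_\sk|_{\allencs}\colon\allencs\to\ptexts$ is a surjective group homomorphism (surjectivity holds because every $m\in\ptexts$ admits at least one ciphertext $\enc_\pk(m;r)\in\allencs$, which decrypts to $m$). Under this identification, $\cipher{m}$ is exactly the preimage $\dec_\sk^{-1}(\{m\})$, and in particular $\cipher{\neutr}$ is the kernel of $\dec_\sk|_{\allencs}$.

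For part~1, I would fix $r\in\Rnd$, set $c_0:=\enc_\pk(m;r)\in\cipher{m}$, and show the two inclusions of $\cipher{m}=c_0\op\cipher{\neutr}$ by a standard one-line calculation: for $c\in\allencs$ one has $c\in\cipher{m}$ iff $\dec_\sk(c)=m=\dec_\sk(c_0)$, and using the homomorphism property this is equivalent to $\dec_\sk(\inv{c_0}\op c)=\neutr$, i.e. $\inv{c_0}\op c\in\cipher{\neutr}$, i.e. $c\in c_0\op\cipher{\neutr}$. Note that this argument implicitly uses that $\allencs$ is closed under $\op$ and inversion, which is guaranteed since $\allencs$ is a subgroup of $\ctexts$.

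For part~2, normality of $\cipher{\neutr}$ in $\allencs$ is immediate from being the kernel of a group homomorphism. Properness follows because $\ptexts$ is non-trivial: picking any $m\neq\neutr$ in $\ptexts$, the coset $\cipher{m}$ is non-empty and disjoint from $\cipher{\neutr}$ (decryption is a function), so $\cipher{\neutr}\subsetneq\allencs$. The cardinality equality $|\cipher{\neutr}|=|\cipher{m}|$ then follows from part~1, since left multiplication by $c_0$ is a bijection $\cipher{\neutr}\to c_0\op\cipher{\neutr}=\cipher{m}$. Finally, the ``It follows'' sentence falls out directly: by part~1, for any fixed $r$ the family $\{\enc_\pk(m;r)\mid m\in\ptexts\}$ selects exactly one element from each coset $\cipher{m}$ of $\cipher{\neutr}$, and since distinct plaintexts yield disjoint cosets, this is a full system of representatives of $\allencs/\cipher{\neutr}$.

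There is no real obstacle here: the whole fact is the first isomorphism theorem specialised to $\dec_\sk|_{\allencs}$. The only points that require a moment of care are (i) checking surjectivity and closure so that we are genuinely dealing with a homomorphism of groups, and (ii) using non-triviality of $\ptexts$ (inherited from $\allencs$ being a non-trivial subgroup, together with the existence of a decryption of $\neutr$) to conclude that the kernel is a \emph{proper} subgroup rather than all of $\allencs$.
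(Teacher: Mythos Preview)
Your proof is correct: recognising $\dec_\sk|_{\allencs}$ as a surjective group homomorphism with kernel $\cipher{\neutr}$ and then invoking standard kernel/coset theory (essentially the first isomorphism theorem) is exactly the right way to establish all three assertions, and your argument for properness via the non-triviality of $\ptexts$ is clean. Note, however, that the paper does not actually supply a proof of this statement; it is recorded as a Fact with a citation to~\cite{armknecht10soap}, so there is no in-paper argument to compare against. One small cosmetic remark: non-triviality of $\ptexts$ is assumed directly in Definition~\ref{def:grouphom}, so you need not derive it from properties of $\allencs$.
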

With this notation, the \indcpa\ security of $\encsch$ is equivalent to saying that the distribution on $\cipher{m_0}$ (induced by the encryption algorithm $\enc_\pk(m)$) is computationally indistinguishable from the distribution on $\cipher{m_1}$ for any two messages $m_0$ and $m_1$~\cite[Ch.~5.2]{goldreich04foc}, i.e., $\cipher{m_0}\compind\cipher{m_1}$.

%\subsection{The Subgroup Membership Problem: A Necessary Security Condition}\label{sec:smp}
\medskip
\noindent{\bf Necessary Security Condition.}
We briefly recall the \emph{Subgroup Membership Problem} (\SMP) which was introduced by Cramer and Shoup in \cite{cramer02uhp}. 
\begin{definition}[Subgroup Membership Problem]\label{def:SMP}
Let $\gen$ be a PPT algorithm that takes a security parameter $\secparam$ as input and outputs descriptions $(\group,\nsgroup)$ where $\nsgroup$ is a non-trivial, proper subgroup of a finite group $\group$. Additionally, we assume here that there is an algorithm that allows for the efficient sampling from $\group\setminus\nsgroup$. We consider the following experiment for a given algorithm $\gen$, algorithm $\adv$ and parameter $\secparam$:
\begin{smpexp}
\item $(\group,\nsgroup)\output\gen(\secparam)$
\item Choose $b\fromunif\{0,1\}$. If $b=1$: $z\output\group\setminus\nsgroup$. Otherwise: $z\output\nsgroup$.
\item $d\output\adv(\group,\nsgroup,z)$ where $d\in\{0,1\}$
\item The output of the experiment is defined to be 1 if $d=b$ and 0 otherwise.
\end{smpexp}
We say that the \SMP\ is \emph{hard for $(\group,\nsgroup)$} (or \emph{relative to $\gen$}) if the advantage
\[\left|\prob{}{\exp{\smp}{\adv,\gen}(\secparam)=1}-\frac{1}{2}\right|\text{ is negligible for all PPT algorithms $\adv$.}\]
\end{definition}

We stress the fact that the efficient sampling from $\group\setminus\nsgroup$ does not have to be uniform. Let $\encsch=(\keygen,\enc,\dec)$ be a group homomorphic encryption scheme with the group $\allencs$ of all encryptions and the subgroup $\cipher{1}$ of all encryptions of the neutral element $1$. In fact, the hardness of \SMP\ for $(\allencs,\cipher{1})$ (i.e., relative to $\keygen$) is a necessary condition for $\encsch$ to be \indcpa\ secure. Recall that the sampling algorithms for the groups $\allencs$ and $\cipher{1}$ are the ones inherited from the encryption algorithm of $\encsch$. In particular, sampling an element $c$ from $\allencs\setminus\cipher{1}$ is done by choosing a random message $m\in\ptexts$ with $m\neq 1$ and then computing $c$ as $\enc_\pk(m;r)$ for $r\output\Rnd$. We have the following immediate result:
\begin{theorem}[Necessary Condition on IND-CPA Security]\label{thm:cpacondition}
For a group homomorphic encryption scheme $\encsch=(\keygen,\enc,\dec)$ we have:
\[\encsch\text{ is \indcpa\ secure }\Longrightarrow\text{ \SMP\ is hard (relative to $\keygen$)}.\]
\end{theorem}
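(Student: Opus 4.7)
The plan is to prove the statement by contrapositive: assume that the \SMP\ for $(\allencs,\cipher{\neutr})$ is not hard relative to $\keygen$ and exhibit a PPT \indcpa\ adversary with non-negligible advantage. Concretely, let $\bdv$ be a PPT algorithm with non-negligible \SMP-advantage $\varepsilon(\secparam)$; the goal is to wrap $\bdv$ inside an \indcpa\ adversary $\adv$. The structural observation that makes the reduction work is Fact~\ref{lemma:c0}: the ciphertext universe $\allencs$ decomposes into the cosets $\cipher{m}$, with $\cipher{\neutr}$ being the distinguished proper normal subgroup, so an encryption of $\neutr$ always lies inside $\cipher{\neutr}$ while an encryption of any other plaintext always lies outside of it.

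The reduction itself is as follows. On input $\pk$, $\adv$ samples $m_1\fromunif\ptexts\setminus\{\neutr\}$ (possible because $\ptexts$ is non-trivial), submits $(m_0,m_1):=(\neutr,m_1)$ to the \indcpa\ challenger, and receives the challenge $c^\ast=\enc_\pk(m_b;r)$ for a uniform bit $b$ and fresh $r\output\Rnd$. The descriptions of $\allencs$ and $\cipher{\neutr}$---their group operations, the neutral element, and the sampling algorithms inherited from $\enc_\pk(\cdot)$ and $\enc_\pk(\neutr)$, respectively---are all computable from $\pk$, so $\adv$ assembles them, invokes $d\output\bdv(\allencs,\cipher{\neutr},c^\ast)$, and outputs $d$ as its guess for $b$. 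Clearly $\adv$ is PPT whenever $\bdv$ is.

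The core of the analysis is then to verify that the two conditional distributions of $c^\ast$ given $b$ coincide with the two conditional distributions of $z$ given the \SMP\ bit in $\exp{\smp}{\bdv,\keygen}(\secparam)$. For $b=0$, $c^\ast=\enc_\pk(\neutr;r)$, which is exactly an $\enc_\pk(\neutr)$-sample from $\cipher{\neutr}$. For $b=1$, $c^\ast=\enc_\pk(m_1;r)$ with $m_1$ uniform in $\ptexts\setminus\{\neutr\}$ and $r$ fresh, which matches precisely the sampling procedure on $\allencs\setminus\cipher{\neutr}$ stipulated just before the theorem; that the result actually lies outside $\cipher{\neutr}$ follows from the disjointness of the cosets in Fact~\ref{lemma:c0}. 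Hence $\adv$ wins iff $\bdv$ wins, $\adv$'s \indcpa-advantage equals $\varepsilon(\secparam)$, and the hypothesized \indcpa-security is contradicted.

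The one delicate point, and essentially the only thing to get right, is the distribution-matching in the $b=1$ case. Committing to a fixed non-neutral $m_1$ in advance would restrict the challenge to $\cipher{m_1}$ rather than covering all of $\allencs\setminus\cipher{\neutr}$ in the manner prescribed by the \SMP\ sampler, so $\bdv$'s average-case advantage would not transfer directly; re-sampling $m_1$ afresh inside $\adv$ is exactly what equalizes the two distributions and lets the advantages match one-to-one.
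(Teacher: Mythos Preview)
Your proposal is correct and follows essentially the same reduction as the paper's proof in Appendix~\ref{apx:proofofcond}: assume an \SMP\ distinguisher, build an \indcpa\ adversary that picks $m_0=\neutr$ and a freshly sampled $m_1\neq\neutr$, forwards the challenge ciphertext to the \SMP\ distinguisher, and outputs its bit. The only cosmetic differences are that you swap the roles of the symbols $\adv$ and $\bdv$ relative to the paper, and you spell out the distribution-matching for the $b=1$ case (via Fact~\ref{lemma:c0}) more explicitly than the paper does.
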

The above holds regardless of the type of adversary (i.e., classical vs quantum) taken into account. A straightforward proof of this Theorem can be found in Appendix~\ref{apx:proofofcond}.
Since it is a popular belief (and for reasons of completeness), we want to point out that the converse of the Theorem does \emph{not} hold in general. This can be seen by considering a somewhat pathological example, which we present in Appendix~\ref{apx:example}.
Note that the converse of Theorem~\ref{thm:cpacondition} \emph{does}, however, hold for so-called \emph{shift-type homomorphic encryption schemes}~\cite{armknecht12shift}, which describe a certain subclass of group homomorphic encryption schemes that actually encompasses all existing instances. Furthermore, it also holds for bit encryption schemes, since there are only two messages, 0 and~1.

% ---

% --- General Necessary Condition on IND-CPA Security
%\section{General Necessary Condition on IND-CPA Security}\label{sec:onsecurity}
%\input{onsecurity}
% ---

% --- General Impossibility in the Quantum World
\section{General Impossibility in the Quantum World}\label{sec:quantum}
Let $\gen$ be a PPT algorithm that takes a security parameter $\secparam$ as input and outputs descriptions $(\group,\nsgroup)$ where $\nsgroup$ is a non-trivial, proper subgroup of a finite group $\group$ with an additional algorithm for the efficient sampling from $\group\setminus\nsgroup$ (cf.\ Section~\ref{sec:grphom}). Now, assume that for any such algorithm $\gen$, we can construct a quantum algorithm $\qadv$ that breaks the hardness of \SMP\ relative to $\gen$. In particular, for a given group homomorphic encryption scheme $\encsch=(\keygen,\enc,\dec)$ this means that we have a quantum algorithm $\qadv$ that breaks the hardness of \SMP\ relative to $\keygen$. However, by Theorem~\ref{thm:cpacondition}, this implies that we can construct an algorithm that breaks the \indcpa\ security of $\encsch$. Since we had no restriction on the encryption scheme $\encsch$, this would imply that \emph{any} group homomorphic encryption scheme $\encsch$ is insecure in terms of \indcpa\ in the quantum world. This is the result we want to prove in this section, at least for the abelian case, i.e., when $\group$ is an abelian group. Therefore, let $\gen$ be as above but with $\group$ being abelian.

It is well-known that a modification of the famous order-finding quantum algorithm~\cite{watrous01qalggrp} can efficiently find the order of an abelian group, given that we have its description by a set of generators.
\begin{theorem}[\hspace{-0.2 mm}Quantum \hspace{-0.2 mm}Order-Finding \hspace{-0.2 mm}Algorithm \hspace{-0.2 mm}with \hspace{-0.2 mm}Generators~\cite{watrous01qalggrp}]\label{thm:ordgen}
Let $\group$ be a finite abelian group with $k = \lceil \log_2(|\group|) \rceil$. Then, there exists a quantum algorithm which, given a generating set of $\group$ and an error probability $\nepsilon$ as an input, outputs the order of $\group$ with probability at least $1-\nepsilon$ in 
time $o(\poly(k+\log_2(1/\nepsilon)))$.
\end{theorem}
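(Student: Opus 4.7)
My plan is to sketch Watrous' approach, which extends Shor's order-finding to arbitrary finite abelian groups via the abelian hidden subgroup problem (HSP). The overarching idea is to reduce order computation to finding a hidden sublattice of $\ZZ^n$ via a group-valued hiding function.

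First, let the given generating set be $g_1,\ldots,g_n$ with $n\le k$, and define the homomorphism $f:\ZZ^n\to\group$ by $f(a_1,\ldots,a_n)=g_1^{a_1}\op\cdots\op g_n^{a_n}$. Since $\group$ is abelian, $f$ is well-defined and efficiently computable (each $g_i^{a_i}$ via repeated squaring in $\poly(k)$ group operations), and its kernel $L:=\ker f$ is a full-rank sublattice of $\ZZ^n$. By the first isomorphism theorem, $|\group|=[\ZZ^n:L]$, so it suffices to recover a basis of $L$ and then compute $[\ZZ^n:L]=|\det L|$ classically via the Smith normal form.

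Next, invoke the standard quantum algorithm for abelian HSP: prepare a uniform superposition over a sufficiently large box $\{0,\ldots,M-1\}^n$, entangle with $f$ in an ancilla register, measure the ancilla, and apply the quantum Fourier transform to the first register. With $M=2^{\Theta(k)}$, each measurement yields a vector that approximates a uniformly random element of the dual lattice $L^\perp$, and $O(n)$ such samples suffice with constant probability to reconstruct $L^\perp$ (and hence $L$) via classical lattice-basis reduction. Repeating the whole procedure $t=O(\log(1/\nepsilon))$ times and verifying each candidate order $N$ by checking $g_i^N=\neutr$ for every generator boosts the success probability to at least $1-\nepsilon$. The total running time is $\poly(k)\cdot O(\log(1/\nepsilon))=o(\poly(k+\log_2(1/\nepsilon)))$, as claimed.

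The main obstacle is the HSP step: because $\ZZ^n$ is infinite, one must truncate to a finite domain and then show that the resulting discretized samples are close enough to exact $L^\perp$-samples for lattice reconstruction to still succeed. This requires choosing $M$ large relative to $|\group|$ (exponential in $k$, but representable with only polynomially many qubits), controlling the Fourier-transform approximation error, and invoking standard results on recovering a lattice from approximately uniform dual samples. Watrous carries out precisely this delicate analysis, which is why the statement can be cited here as a black box.
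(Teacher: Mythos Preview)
The paper does not prove this theorem; it is quoted from Watrous and used throughout Section~\ref{sec:quantum} as a black box. So there is no ``paper's own proof'' to compare against---your sketch is additional content rather than a reconstruction of anything the authors wrote.

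That said, your outline is a correct summary of the standard route (and essentially of what Watrous does): encode the problem as an abelian HSP over $\ZZ^n$ via the kernel lattice $L=\ker f$ of the evaluation map $f(a_1,\ldots,a_n)=g_1^{a_1}\cdots g_n^{a_n}$, sample the dual lattice with the QFT on a truncated box, and recover $L$ classically so that $|\group|=[\ZZ^n:L]=|\det L|$. Two minor points are worth tightening. First, the hypothesis does not promise $n\le k$; the given generating set may be longer, so either state the cost as polynomial in both $n$ and $k$, or observe that one can first prune to an irredundant generating subset, which for an abelian group of order at most $2^k$ indeed has size at most $k$. Second, your amplification check ``$g_i^N=\neutr$ for all $i$'' only certifies that $N$ is a multiple of the group \emph{exponent}, which for non-cyclic abelian groups is strictly smaller than the order; it does not by itself rule out a wrong (too small) answer. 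The clean fix is to verify instead that each returned lattice basis vector lies in $\ker f$ (this guarantees the returned index is always $\ge|\group|$) and then output the minimum verified index over the $O(\log(1/\nepsilon))$ independent runs.
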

This Theorem already is sufficient to break the hardness of \SMP\ (relative to $\gen$), \emph{if} the description of $\nsgroup$ contains a set of generators, as the next Theorem shows.
\begin{theorem}[Quantum Attack on SMP with Generators]\label{thm:membgen}
Let $(\group,\nsgroup)$ be the output of $\gen(\secparam)$, for some security parameter $\secparam$, such that $\nsgroup$ contains a set of generators $g_1,\ldots,g_r$. Since $\gen$ is a PPT algorithm, this implies that $k=k(\secparam)=\lceil \log_2(|\nsgroup|) \rceil$ is a polynomial in $\secparam$. There exists a quantum algorithm which, given $g_1,\ldots,g_r$ (i.e., the description of $\nsgroup$), breaks the hardness of \SMP\ with probability at least $\left(1-\nepsilon\right)^2$ in time $o(\poly(k+\log_2(1/\nepsilon)))$.
\end{theorem}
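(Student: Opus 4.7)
The plan is to reduce the \SMP\ instance to two invocations of the quantum order-finding algorithm of Theorem~\ref{thm:ordgen}. Given the challenge $z\in\group$ and the generators $g_1,\ldots,g_r$ of $\nsgroup$, the crucial observation is that $z\in\nsgroup$ if and only if $\genby{g_1,\ldots,g_r,z}=\nsgroup$, and since the former group always contains the latter and both are finite, this is equivalent to the equality of orders $|\genby{g_1,\ldots,g_r,z}|=|\nsgroup|$. Because $\group$ is abelian, both $\nsgroup$ and $\genby{g_1,\ldots,g_r,z}$ are finite abelian groups, so Theorem~\ref{thm:ordgen} applies directly to each. Thus deciding \SMP\ boils down to comparing the orders of two explicitly generated subgroups of $\group$.

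Concretely, I would define the quantum adversary $\qadv$ so that it first runs the algorithm of Theorem~\ref{thm:ordgen} on input $\{g_1,\ldots,g_r\}$ with error parameter $\nepsilon$ to obtain a value $N_1$, then runs it again (with fresh randomness) on input $\{g_1,\ldots,g_r,z\}$ and the same $\nepsilon$ to obtain $N_2$, and finally outputs $0$ (i.e.\ guesses $z\in\nsgroup$) when $N_1=N_2$, else $1$. For correctness, if $z\in\nsgroup$ then both invocations target groups of order $|\nsgroup|$, and by independence both return the correct order with joint probability at least $(1-\nepsilon)^2$; in that event $N_1=N_2$ and the guess is right. If instead $z\in\group\setminus\nsgroup$, then $\genby{g_1,\ldots,g_r,z}$ strictly contains $\nsgroup$, so by Lagrange's theorem its order is a proper integer multiple of $|\nsgroup|$; again the two independent invocations succeed jointly with probability at least $(1-\nepsilon)^2$ and deliver these two distinct true values, so the comparison yields the right answer.

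For the running time, each invocation of Theorem~\ref{thm:ordgen} costs $o(\poly(k+\log_2(1/\nepsilon)))$; the second invocation is performed on a subgroup of $\group$ whose log-order is within a polynomial factor of $k$ by the parameter conventions of Section~\ref{sec:notation}. The final integer comparison is negligible in cost. The main (and really rather minor) obstacle I anticipate is bundling the two independent success events of the order-finding runs into the clean joint bound $(1-\nepsilon)^2$ together with the structural remark that in the abelian case adjoining an element $z\notin\nsgroup$ strictly enlarges the generated subgroup; no deeper technical difficulty is expected, since all of the quantum content has been encapsulated by Theorem~\ref{thm:ordgen}.
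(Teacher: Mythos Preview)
Your proposal is correct and matches the paper's proof essentially line for line: run the order-finding algorithm of Theorem~\ref{thm:ordgen} once on $\{g_1,\ldots,g_r\}$ and once on $\{g_1,\ldots,g_r,z\}$, compare the two orders, and use the joint success probability $(1-\nepsilon)^2$. If anything, you supply more detail than the paper does (the explicit appeal to Lagrange's theorem for the strict-inclusion case and the remark on the second group's log-order), but the underlying argument is identical.
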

\begin{proof}
Let $z$ denote the challenge in the \SMP\ game (Def.~\ref{def:SMP}), i.e., $z\in\group\setminus\nsgroup$ if $b=1$, and $z\in\nsgroup$ otherwise. Since $\nsgroup$ contains a set of generators $g_1,\ldots,g_r$, we can run the quantum algorithm in Theorem~\ref{thm:ordgen} twice: the first time on the generating 
set and the second time on the generating set plus the element $z$. Provided that both runs succeed, we have that $z\in\nsgroup$ (i.e., $b=0$) if and only if the two subgroup orders, obtained from the two algorithm runs, are the same. But both runs succeed with probability $\left(1-\nepsilon\right)^2$. This proves the Theorem.
\qed
\end{proof}
Recall that the original definition of \SMP\ gives no set of generators for $\nsgroup$ a priori, since the description of a group only contains standard algorithms for the group operations and a sampling algorithm (cf.\ Section~\ref{sec:notation}). However, we show that the previous Theorem extends to this case, i.e., when only having a sampling algorithm. For the sake of readability, we will first treat the case of sampling \emph{uniformly at random} from $\nsgroup$ (Section~\ref{sec:uniformextension}), and will then show the general case with arbitrary (possibly unknown) sampling from $\nsgroup$ (Section~\ref{sec:generalextension}).

\subsection{Breaking SMP with Uniform Sampling}\label{sec:uniformextension}
It is well-known that if we have a sampling algorithm for $\nsgroup$ that samples \emph{uniformly at random}, we can obtain a set of generators by sampling polynomially (in the base-2 logarithm of the order of $\nsgroup$) many times from $\nsgroup$. If $k=\lceil\log_2(|\nsgroup|)\rceil$, Pak and Bratus~\cite{pak99sampling} show that $k+4$ samples are sufficient to generate the whole group with probability $> 3/4$. This result is an improvement over a result by Erd\"os and R\'enyi~\cite{erdos65sampling}. We recall it in the following Theorem:
\begin{theorem}[Probability of Finding a Generating Set with Uniform Sampling~\cite{pak99sampling}]\label{thm:unifsampl}
Let $\nsgroup$ be a finite abelian group of order $n$ where $k = \lceil \log_2(n) \rceil$. Then:
\[\prob{x_1,\ldots,x_{k+4}\fromunif\nsgroup}{\genby{x_1,\ldots,x_{k+4}}=\nsgroup}>\frac{3}{4}.\]
\end{theorem}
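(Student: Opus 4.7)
The plan is to follow the classical Frattini-quotient strategy: reduce the event that $x_1,\ldots,x_m$ generate $\nsgroup$ to the corresponding spanning events in each Frattini quotient $\nsgroup/p\nsgroup$, bound the failure probability via a union bound over hyperplanes, and sum over the primes $p$ dividing $|\nsgroup|$.

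First, I would use the fact that the maximal subgroups of a finite abelian group $\nsgroup$ are exactly the preimages of the codimension-$1$ subspaces of its Frattini quotients $\nsgroup/p\nsgroup\cong(\mathbb{F}_p)^{r_p}$, where $r_p$ denotes the $p$-rank of $\nsgroup$ and $p$ ranges over the prime divisors of $|\nsgroup|$. Samples $x_1,\ldots,x_m$ fail to generate $\nsgroup$ if and only if they all lie in a common maximal subgroup. Each such subgroup has prime index $p$, contains a uniformly random $x\in\nsgroup$ with probability $1/p$ (so contains all $m$ samples with probability $p^{-m}$), and the number of maximal subgroups of index $p$ equals the number $\tfrac{p^{r_p}-1}{p-1}$ of hyperplanes in $(\mathbb{F}_p)^{r_p}$. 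A union bound therefore yields
\[
\prob{x_1,\ldots,x_m\fromunif\nsgroup}{\genby{x_1,\ldots,x_m}\neq\nsgroup}\leq\sum_{p\mid|\nsgroup|}\frac{p^{r_p}-1}{p-1}\cdot p^{-m}.
\]

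Second, I would specialize to $m=k+4$ and use $p^{r_p}\leq|\nsgroup|\leq 2^k$, so that the $p$-th summand is at most $\frac{2^k}{(p-1)\,p^{k+4}}$. For $p=2$ this equals exactly $1/16$; for each $p\geq 3$ the factor $(2/p)^k\leq 1$ gives a summand of at most $\frac{1}{(p-1)p^4}$. Hence the whole failure probability is bounded by $\frac{1}{16}+\sum_{p\geq 3}\frac{1}{(p-1)p^4}\leq\frac{1}{16}+\frac{1}{162}+\frac{1}{2500}+\cdots<\frac{1}{4}$, and taking complements produces the claimed generation probability strictly greater than $3/4$.

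The main obstacle will be keeping the prime-by-prime estimate tight enough: the $p=2$ term alone already eats $1/16$ of the $1/4$ budget, so one has to ensure that the remaining primes contribute strictly less than $3/16$. This is routine via the geometric decay in $p$, but would fail under a naive uniform bound on each prime's term. A minor technicality is that the Frattini reduction is only informative when $m\geq r_p$, which is automatic here since $r_p\leq\log_2|\nsgroup|\leq k<k+4=m$.
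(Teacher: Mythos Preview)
The paper does not prove this theorem at all: it is quoted from Pak and Bratus and invoked as a black box, so there is no in-paper argument to compare against. Your maximal-subgroup union-bound argument is correct---maximal subgroups of a finite abelian group are exactly the index-$p$ subgroups, there are $(p^{r_p}-1)/(p-1)$ of them for each prime $p\mid n$, and the resulting estimate $\tfrac{1}{16}+\sum_{p\ge 3}\tfrac{1}{(p-1)p^4}<\tfrac{1}{4}$ checks out---and this is indeed the classical route underlying the cited result.
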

As an immediate corollary of this Theorem and Theorem~\ref{thm:membgen} we have the main result of this section.
\begin{theorem}[Quantum Attack on SMP with Uniform Sampling]\label{thm:final}
Let $(\group,\nsgroup)$ be the output of $\gen(\secparam)$ with $k=\lceil \log_2(|\nsgroup|) \rceil$, for some security parameter $\secparam$, such that the sampling algorithm in the description of $\nsgroup$ samples \emph{uniformly at random} from $\nsgroup$. Then, there exists a quantum algorithm which breaks the hardness of \SMP\ with probability at least $\frac{3}{4}\left(1-\nepsilon\right)^2$ in time $o(\poly(k+\log_2(1/\nepsilon)))$, and by sampling only $k+4$ times from $\nsgroup$.
\end{theorem}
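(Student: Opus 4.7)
The plan is to combine Theorem~\ref{thm:unifsampl} and Theorem~\ref{thm:membgen} directly, essentially as an immediate corollary. Given a challenge $(\group,\nsgroup,z)$ drawn from the \SMP\ experiment, the attack proceeds in two stages: a sampling stage, which builds a candidate generating set for $\nsgroup$ from the uniform sampling algorithm, and a quantum stage, which uses these generators together with $z$ to decide membership as in Theorem~\ref{thm:membgen}.

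First I would draw $x_1,\ldots,x_{k+4}\fromunif\nsgroup$ using the uniform sampling algorithm from the description of $\nsgroup$. By Theorem~\ref{thm:unifsampl}, we have $\genby{x_1,\ldots,x_{k+4}}=\nsgroup$ with probability strictly greater than $3/4$; call this event $E_{\rm gen}$. Conditioned on $E_{\rm gen}$, I then feed the set $\{x_1,\ldots,x_{k+4}\}$ (together with the error parameter $\nepsilon$) to the quantum order-finding algorithm of Theorem~\ref{thm:ordgen} exactly as in the proof of Theorem~\ref{thm:membgen}: once on the generators alone, and once on the generators augmented by the challenge $z$. The algorithm outputs $b=0$ iff the two orders agree, and by Theorem~\ref{thm:membgen} this decision is correct with probability at least $(1-\nepsilon)^2$ conditioned on $E_{\rm gen}$.

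Because the sampling of the $x_i$ is independent of the coins of the quantum subroutine, the two failure events are independent, so the overall success probability is at least $\tfrac{3}{4}(1-\nepsilon)^2$. For the running time, the sampling stage performs $k+4$ invocations of the (PPT) sampler, which is polynomial in $\secparam$, hence polynomial in $k$; the quantum stage runs in time $o(\poly(k+\log_2(1/\nepsilon)))$ by Theorem~\ref{thm:ordgen} applied to a group of order at most $|\nsgroup|\cdot|\langle z\rangle|\leq|\group|$, whose bit-length is still polynomial in $\secparam$. Combining these bounds yields the claimed complexity.

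There is no substantive obstacle here, since both ingredients are already in place; the only point requiring a moment of care is the independence argument used to multiply the $3/4$ and $(1-\nepsilon)^2$ probabilities, and the observation that the order-finding subroutine called on $\{x_1,\ldots,x_{k+4},z\}$ still operates on an abelian group whose logarithmic size is polynomial in $\secparam$, so that Theorem~\ref{thm:ordgen} applies with the stated time bound.
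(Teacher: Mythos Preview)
Your proposal is correct and matches the paper's approach exactly: the paper itself states the result as ``an immediate corollary'' of Theorem~\ref{thm:unifsampl} and Theorem~\ref{thm:membgen} and gives no further argument, so you have in fact written out more detail than the paper does. The only additions you make---the independence remark and the observation that $\langle x_1,\ldots,x_{k+4},z\rangle$ still has logarithmic size polynomial in $\secparam$---are reasonable points of care that the paper leaves implicit.
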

We remark that the constant $\frac{3}{4}$ can be greatly improved by increasing the number of samples we take from $\nsgroup$, approximating $1$ very quickly.
In general, by performing $k+l$ random sampling, the success 
probability approximates $1$ exponentially fast in $l$.

\subsection{Breaking SMP with Arbitrary/Unknown Sampling}\label{sec:generalextension}
In this section, we show an extension of Theorem~\ref{thm:final} to the general case, where the description of $\nsgroup$ only contains a sampling algorithm with {\em unknown/arbitrary distribution} $\dist$. Since we do not make any restrictions on the sampling algorithm, we  cannot exclude seemingly exotic cases where parts  of $\nsgroup$ are hardly (or not at all) reached by the sampling algorithm. Consider the following example:

\begin{example}
Let $\secparam\geq 1$ be the security parameter. We define a family of groups by $\group_\secparam:=GF(2)^\secparam$ together with sampling distributions $\dist_\secparam$ on $\group_\secparam$ as through the probability mass function
\begin{equation}
\dist_\lambda(v_1,\ldots,v_\secparam):=
\left\{ \begin{array}{ll}
\frac{1}{2^{\secparam-1}}-\frac{1}{2^{\secparam\cdot (\secparam-1})}   & \mbox{, if } v_1=0 \\
\frac{1}{2^{\secparam\cdot (\secparam-1})} & \mbox{, otherwise.}
\end{array}\right.
\end{equation}
Here, $(v_1,\ldots,v_\secparam)$ denotes an arbitrary element from $GF(2)^\secparam$. Observe that the probability of sampling one vector $(v_1,\ldots,v_\secparam)$ with $v_1=1$ is $2^{-\secparam}$. However, at least one such sample is necessary for a generating set of the whole group. This shows that the probability of sampling a generating set for the whole group is negligible in $\secparam$.
\end{example}
As the examples illustrates, the best we can aim for (in general) is to find a generating set for a subgroup of $\nsgroup$ such that the probability that a random sample (with respect to $\dist$) does fall into this group is sufficiently large. This motivates the following definition:
 \begin{definition}[Covering Subgroup]
 Let a finite group $\nsgroup$ be given, together with a sampling distribution $\dist$. For a value $0\leq \cover\leq 1$, we say that a subgroup $\nsgroup^*\leq \nsgroup$ is a \emph{$\cover$-covering subgroup} of $\nsgroup$ with respect to $\dist$ if 
 \begin{equation}
\prob{x \fromdistr{\dist}\nsgroup}{x\in\nsgroup^*}\geq \cover.
 \end{equation}
 \end{definition}
 \begin{example}
Observe that the whole group $\nsgroup$ is trivially a $\cover$-covering subgroup. A less trivial example is the following. We order the elements $h\in\nsgroup$ in descending order according to their probabilities of being sampled, that is $h_1,h_2,\ldots$ with $\dist(h_i)\geq \dist(h_{i+1})$ for all $i$. Now, let $b$ denote the smallest index such that $\sum_{i=1}^b\dist(h_i)\geq \cover$. Then
 $\genby{h_1,\ldots,h_b}$ is for sure a $\cover$-covering subgroup. 
 \end{example}
 
Obviously, it follows directly from Theorem \ref{thm:membgen} that given generators of a $\cover$-covering subgroup, there exists a quantum attack on SMP with success probability at least $\cover\cdot \left(1-\nepsilon\right)^2$ in time $o(\poly(k+\log_2(1/\nepsilon)))$. Thus in the remainder of this section, we consider the task of finding, with probability $\geq \coversuccess$, a generating set for a $\cover$-covering subgroup (for  fixed, but arbitrary  values $\cover,\coversuccess$) if only a sampling algorithm $\Sample$ is given which samples according to an arbitrary (possibly unknown) distribution $\dist$. 
To this end, we prove the following new result on the probability of finding a $\cover$-covering subgroup (with generators) of a finite group with arbitrary/unknown sampling distribution and a given value $\cover$.

\begin{theorem}[Sampling a Generating Set for a $\cover$-covering Subgroup]\label{th:main_arbitrary}
Let $\nsgroup$ be a finite group, together with a sampling algorithm $\Sample$ that samples according to a (possibly unknown) distribution $\dist$, and let $k=\lceil\log_2(|\nsgroup|)\rceil$. Moreover, fix two values $0\leq \cover,\coversuccess\leq 1$ and set $\coversamples:=\loosebound$.

Let $x_1,\ldots,x_{\coversamples\cdot k+1}\in\nsgroup$ be $\coversamples\cdot k+1$ samples from $\nsgroup$  by invoking the sampling algorithm, i.e., $x_i\leftarrow \Sample$ for $i=1,\ldots,\coversamples\cdot k+1$.
Then with probability at least $\coversuccess$, the group $\nsgroup^*:=\genby{x_1,\ldots,x_{\coversamples\cdot k+1}}$  is a $\cover$-covering subgroup of $\nsgroup$.
\end{theorem}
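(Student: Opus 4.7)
The plan is to exploit the fact that any strictly ascending chain of subgroups of $\nsgroup$ has length at most $k$, since each strict inclusion at least doubles the subgroup order and $|\nsgroup|\leq 2^k$. First I would focus on the first $\coversamples k$ samples (the extra sample $x_{\coversamples k+1}$ only enlarges the generated subgroup and can safely be ignored for the covering analysis) and consider the chain $\nsgroup_0\leq\nsgroup_1\leq\cdots\leq\nsgroup_{\coversamples k}$, where $\nsgroup_j:=\genby{x_1,\ldots,x_j}$. I would then partition these $\coversamples k$ samples into $k$ consecutive blocks of length $\coversamples$ and, for each $j=1,\ldots,k$, let $E_j$ denote the event that block $j$ adds nothing new, i.e.\ $x_{(j-1)\coversamples+1},\ldots,x_{j\coversamples}\in\nsgroup_{(j-1)\coversamples}$.

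The key probabilistic observation is that if, along a given sample path, $\nsgroup_{(j-1)\coversamples}$ happens not to be a $\cover$-covering subgroup, then by definition $\prob{x\fromdistr{\dist}\nsgroup}{x\in\nsgroup_{(j-1)\coversamples}}<\cover$; since the $\coversamples$ samples of block $j$ are i.i.d.\ and independent of the history given $\nsgroup_{(j-1)\coversamples}$, this yields the pointwise bound
\[\prob{}{E_j\mid\nsgroup_{(j-1)\coversamples}=\nsgroup'}<\cover^\coversamples\quad\text{for every non-$\cover$-covering }\nsgroup'.\]
Averaging over $\nsgroup_{(j-1)\coversamples}$ and applying a union bound over $j=1,\ldots,k$ would then give
\[\prob{}{\exists\,j\in\{1,\ldots,k\}:E_j\text{ holds and }\nsgroup_{(j-1)\coversamples}\text{ is not $\cover$-covering}}<k\cdot\cover^\coversamples.\]

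Next I would argue that $\nsgroup^*$ fails to be $\cover$-covering \emph{only} inside this bad event. Indeed, if $\nsgroup^*$ were not $\cover$-covering then every $\nsgroup_{(j-1)\coversamples}\leq\nsgroup^*$ would also fail to be $\cover$-covering, since the covering property is inherited by supergroups and hence its negation by subgroups. Moreover, if no $E_j$ occurred then the chain would grow strictly in each of the $k$ blocks, forcing $|\nsgroup_{\coversamples k}|\geq 2^k\geq|\nsgroup|$ and hence $\nsgroup_{\coversamples k}=\nsgroup\subseteq\nsgroup^*$, which is trivially $\cover$-covering---a contradiction. Thus on the complement of the bad event, $\nsgroup^*$ is $\cover$-covering; and unrolling the definition $\coversamples=\loosebound$ (recalling that $\log\cover\leq 0$ flips the inequality when isolating $\coversamples$) yields $k\cdot\cover^\coversamples\leq 1-\coversuccess$, giving the claimed success probability $\geq\coversuccess$.

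The main obstacle I anticipate is purely one of bookkeeping: spelling out the conditional-independence step cleanly enough to justify the pointwise bound $\prob{}{E_j\mid\nsgroup_{(j-1)\coversamples}=\nsgroup'}<\cover^\coversamples$ without getting tangled in the fact that the conditioning subgroup is itself random and depends on earlier samples, and handling the easy but slightly special edge cases ($k=0$, $\cover=1$, or $\coversuccess=0$) in which the block partition or the formula for $\coversamples$ degenerates.
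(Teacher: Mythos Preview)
Your proposal is correct and follows essentially the same approach as the paper: partition the samples into $k$ blocks of size $\coversamples$, bound by $\cover^\coversamples$ the probability that a block stalls while the current subgroup is not yet $\cover$-covering, union-bound over the $k$ blocks, and use Lagrange to force $\nsgroup^*=\nsgroup$ whenever all $k$ blocks strictly enlarge the subgroup. The only cosmetic difference is that the paper first phrases this via an auxiliary algorithm with a membership-test abort (to make the case split explicit) and then removes the test, whereas you carry out the probabilistic argument directly; the underlying mathematics is identical.
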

Observe that like in the case of uniform sampling, a polynomial number of samples (almost linear in $k$) is sufficient. Interestingly, this number of samples is independent of the distribution.
 
For the sake of readability, we prove Theorem \ref{th:main_arbitrary} in two steps.
In the first step, we present an algorithm (Algorithm~\ref{alg:sampling_with_test}) that makes {\em at most} $\coversamples\cdot k+1$ samples and outputs a set $S\subset\nsgroup$. We prove that $S$ is a generating set for a $\cover$-covering subgroup with probability at least $\coversuccess$. 
The algorithm relies on the assumption of the existence of an efficient membership testing procedure. But in the second step we present a modification of the algorithm, Algorithm \ref{alg:no_test}, that works {\em without} the membership testing procedure and has at least the same success probability. In fact, Algorithm~\ref{alg:no_test} makes {\em exactly} $\coversamples\cdot k+1$ samples, hence proving Theorem \ref{th:main_arbitrary}.

\begin{algorithm}[t]
\caption{Sample generating set of a $\cover$-covering subgroup\label{alg:sampling_with_test}}
\begin{algorithmic}[1]
\REQUIRE A group $\nsgroup$ with sampling algorithm $\Sample$, an integer $k = \lceil \log_2{|\nsgroup|} \rceil$, a membership testing procedure that efficiently tests for any subset $S\subset\nsgroup$ and any $x\in \nsgroup$ whether $x\in\langle S\rangle$, two real values $0\leq \cover,\coversuccess\leq 1$.
\ENSURE A set $S$ of elements that generate a $\cover$-covering subgroup of $\nsgroup$ with probability at least $\coversuccess$.
\STATE \ 
\STATE $x\leftarrow\Sample$, $S \leftarrow \{x\}$\hfill\COMMENT{Initial candidate for a generating set}\label{algline:init_set}
\STATE $\coversamples:=\loosebound$ \hfill\COMMENT{Number of samples per round}
\STATE \ 
\FOR{$j=1,\ldots,k$}
\STATE $x_i\leftarrow\Sample,i=1,\ldots,\coversamples$\hfill\COMMENT{Sample $\coversamples$ elements from $\nsgroup$}
\IF{$x_i\in \langle S\rangle$ for all $i=1,\ldots,\coversamples$} 
\STATE Abort \textbf{for}-loop\hfill\COMMENT{Abort as all samples are already in $\langle S\rangle$} \label{algline:abort1}
\ELSE  
\STATE $S\leftarrow S\cup \{x_1,\ldots,x_\coversamples\}$\hfill\COMMENT{Extend candidate generating set}\label{algline:ext_set}
\ENDIF
\ENDFOR
\STATE\ 
\RETURN $S$
\end{algorithmic}
\end{algorithm}

\begin{algorithm}[t]
\caption{Sample generating set of a $\cover$-covering subgroup\label{alg:no_test}}
\begin{algorithmic}[1]
\REQUIRE A group $\nsgroup$ with sampling algorithm $\Sample$, an integer $k = \lceil \log_2{|\nsgroup|} \rceil$, and two real values $0\leq \cover,\coversuccess\leq 1$
\ENSURE A set $S$ of elements that generate a $\cover$-covering subgroup of $\nsgroup$ with probability $\geq\coversuccess$
\STATE \ 
\STATE $x\leftarrow\Sample$, $S \leftarrow \{x\}$\hfill\COMMENT{Initial candidate for a generating set}
\STATE $\coversamples:=\loosebound$ \hfill\COMMENT{Number of samples per round}
\STATE \ 
\FOR{$j=1,\ldots,k$}
\STATE $x_i\leftarrow\Sample,i=1,\ldots,\coversamples$\hfill\COMMENT{Sample $\coversamples$ elements from $\nsgroup$}
\STATE $S\leftarrow S\cup \{x_1,\ldots,x_\coversamples\}$\hfill\COMMENT{Extend candidate generating set}
\ENDFOR
\STATE\ 
\RETURN $S$
\end{algorithmic}
\end{algorithm}

%\vspace{0.2cm}
We start with Algorithm~\ref{alg:sampling_with_test} and prove the following result:
\begin{theorem}[Correctness of Algorithm~\ref{alg:sampling_with_test}]
With a probability of at least $\coversuccess$, the output $S$ of Alg.~\ref{alg:sampling_with_test} is a generating set for a $\cover$-covering subgroup.
\end{theorem}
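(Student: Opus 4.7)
The plan is to analyze Algorithm~\ref{alg:sampling_with_test} by tracking how $\langle S\rangle$ evolves across iterations, relying on two basic observations. First, since extending $S$ can only enlarge $\langle S\rangle$, once $\langle S\rangle$ has become a $\cover$-covering subgroup it stays one for the rest of the execution, and hence the returned set already generates a $\cover$-covering subgroup. Second, if at the start of some round $\langle S\rangle$ is \emph{not} $\cover$-covering, then by definition $\prob{x\fromdistr{\dist}\nsgroup}{x\in\langle S\rangle}<\cover$, so the probability that all $\coversamples$ independent samples drawn in that round fall into $\langle S\rangle$---which is precisely the abort condition on line~\ref{algline:abort1}---is strictly less than $\cover^\coversamples$.

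Next I will bound the total number of non-aborting rounds via a Lagrange-type doubling argument. Whenever the \textbf{for}-loop does not abort, at least one sample $x_i$ lies outside $\langle S\rangle$, and after the extension on line~\ref{algline:ext_set} the group $\langle S\rangle$ grows to a strictly larger subgroup whose order, by Lagrange's theorem, must be at least twice as large. Since $k=\lceil\log_2|\nsgroup|\rceil$ and $|\langle S\rangle|\geq 1$ from the initial sample, $k$ non-aborting rounds would already force $|\langle S\rangle|\ge 2^k\ge|\nsgroup|$ and thus $\langle S\rangle=\nsgroup$, which is trivially $\cover$-covering. Combining the two observations, the only way Algorithm~\ref{alg:sampling_with_test} can return a non-$\cover$-covering set is if it aborts in some round during which $\langle S\rangle$ is not yet $\cover$-covering.

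I will then conclude with a union bound over the (at most) $k$ rounds: the overall failure probability is at most $k\cdot\cover^\coversamples$, and plugging $\coversamples=\loosebound$ together with $\log\cover<0$ into the inequality $k\cdot\cover^\coversamples\le 1-\coversuccess$ justifies the choice of $\coversamples$ and yields success probability at least $\coversuccess$. The only real subtlety is that the state $\langle S\rangle$ at the start of round $j$ depends on all earlier samples, so the events ``abort wrongly in round $j$'' are not independent across $j$. However, the per-round bound $\prob{}{\text{abort in round }j\,\mid\,S_{j-1}}<\cover^\coversamples$ holds pointwise for every state $S_{j-1}$ with $\langle S_{j-1}\rangle$ not $\cover$-covering, so the union bound goes through without any further conditioning, which I expect to be the main technical point to state cleanly in the write-up.
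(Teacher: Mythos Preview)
Your proposal is correct and follows essentially the same argument as the paper: a Lagrange doubling bound showing that at most $k$ genuine extensions are possible, a per-round bound of $\cover^{\coversamples}$ on the probability of aborting while $\langle S\rangle$ is not yet $\cover$-covering, and a union bound over the $k$ rounds yielding failure probability at most $k\cdot\cover^{\coversamples}\le 1-\coversuccess$. Your explicit remark that the pointwise conditional bound suffices for the union bound is in fact cleaner than the paper's phrasing, which calls the rounds ``independent'' without further comment.
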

\begin{proof}
Let $S$ denote the output of Alg.~\ref{alg:sampling_with_test} and $\nsgroup^*:=\langle S\rangle$. There are two possibilities: (i) the algorithm aborted the \textbf{for}-loop for some value $j<k$ or (ii) the algorithm executed all $k$ \textbf{for}-loops. 

First, we consider case (i). At the same time, assume that $\nsgroup^*$ is \emph{not} a $\cover$-covering subgroup, that is $$\cover^*:=\prob{}{x\in\nsgroup^*\vert x \fromdistr{\dist}\nsgroup}< \cover$$ (this would be a failure of the algorithm). As the algorithm aborted the \textbf{for}-loops for some value $j<k$ by assumption, this can only happen if $x_i\in\langle S\rangle=:\nsgroup^*$ for all $\coversamples$ samples made in round $j$ although $\cover^*<\cover$. 
%As the samples are made independently, the probability of this event is $\left(\cover^*\right)^N< \cover^\coversamples<1-\coversuccess$ by definition of $\coversamples$, for any round $j$. Hence, the probability that the output is correct, i.e., is a generating set of a $\cover$-covering subgroup, is at least $1-(1-\coversuccess)=\coversuccess$. This concludes the first case.
As the samples are made independently, the probability of this error event happening at a certain round is $\left(\cover^*\right)^N< \cover^\coversamples$; since there are at most $k-1$ independent rounds in case (i), the probability that an error occurs in any of them is at most 
$k\cdot \cover^\coversamples < 1-\coversuccess$ by definition of $\coversamples$. Hence, the probability that no error happens and the output is correct, i.e., is a generating set of a $\cover$-covering subgroup, is at least $1-(1-\coversuccess)=\coversuccess$. This concludes the first case.

Now, we consider case (ii), i.e., the algorithm has executed all $k$ \textbf{for}-loops. 
 For simplicity, we index the sets $S$ according to the round number. More precisely, let $S_0$ denote the initial candidate for the generating set (line \ref{algline:init_set}). Moreover, let $S_\ell$ denote the set  $S$ at the end of the while loop (after being extended - see line \ref{algline:ext_set}) and we define $\nsgroup_\ell:=\langle S_\ell\rangle$ for $\ell\geq 0$. Observe that $\nsgroup_\ell\subseteq\nsgroup$  for all $\ell$ by construction. The output of the algorithm is $S=S_k$. We make use of the following inequalities that we prove afterwards: 
\begin{equation}
\ord(\nsgroup_\ell)\geq 2^\ell\quad, \forall \ell\geq 0.\label{eq:claim}
\end{equation}
A consequence of \eqref{eq:claim} is that $
\ord(\nsgroup_k)\geq 2^k\geq \ord(\nsgroup)
$ which implies that $\nsgroup_k=\nsgroup$. Hence, $\nsgroup^*=\nsgroup_k=\nsgroup$ is the whole group and trivially a $\cover$-covering group for any value $0\leq \cover\leq 1$. 
%In particular, $ it is ensured for the following while loop (loop $\#\ k+1$) that all sampled elements $x_i$ are a member of $\nsgroup_k$. 
%In particular line \ref{algline:abort1} is never executed and the algorithm terminates the while loop. This proves that the algorithm terminates in the $(k+1)$-st  while loop at the latest.

It remains to prove the inequalities in \eqref{eq:claim}, i.e., $\ord(\nsgroup_\ell)\geq 2^\ell$ for all $0\leq \ell\leq k$. Observe that $\nsgroup_\ell$ is a proper subgroup of $\nsgroup_{\ell + 1}$ for every $\ell < k$. Thus, the number $\frac{\left| \nsgroup_{\ell+1}\right|}{\left| \nsgroup_\ell\right|}$ (which is an integer, by Lagrange's Theorem), must be strictly greater than $1$. Hence $\left| \nsgroup_{\ell+1}\right| \geq 2 \left| \nsgroup_\ell\right|$, and  this proves \eqref{eq:claim} since $\left| \nsgroup_0 \right|=1$. \qed

%We do so by induction over $\ell$. For $\ell=0$, this is trivially fulfilled. Now assume that the claim is true for some $\ell\geq 0$, that is $\ord(\nsgroup_\ell)\geq 2^\ell$. By specification of the algorithm, $\nsgroup_{\ell+1}$ is constructed by including all samples $x_i$ into $S$ (line \ref{algline:ext_set}) where at least one sample $x$ is not a member of $\nsgroup_\ell$. Let $\nsgroup_\ell=\{h_1,\ldots,h_r\}$ with $r=\ord(\nsgroup_\ell)$.  Then $\nsgroup_{\ell+1}$ contains at least the elements $h_1,\ldots,h_r,h_1\cdot x,\ldots,h_r\cdot x$. These are all pairwise distinct which can be shown by contradiction. Assume that this is not the case. We have three possibilities:
%\begin{description}
%\item[Case A:] $h_i=h_j$ for $i\neq j$. This would contradict the fact that $\nsgroup_\ell=\{h_1,\ldots,h_r\}$ and  $r=\ord(\nsgroup_\ell)$.
%\item[Case B:] $h_i=h_j\cdot x$ for $i\neq j$. Then it would follow that $x=h_j^{-1}\cdot h_i\in\nsgroup_\ell$ which contradicts the assumption that $x\not\in\nsgroup_\ell$.
% \item[Case C:] $h_i\cdot x=h_j\cdot x$ for $i\neq j$. This would imply $h_i=h_j$, being case A.
%\end{description}
% Summing up, the $2r$ elements $h_1,\ldots,h_r,h_1\cdot x,\ldots,h_r\cdot x$ are pairwise distinct. This shows that 
% \begin{equation}
% \ord(\nsgroup_{\ell+1})\geq 2r=2\cdot \ord(\nsgroup_{\ell})\geq 2\cdot 2^\ell=2^{\ell+1}
% \end{equation}
%and proves in particular \eqref{eq:claim}. \qed
\end{proof}

Observe that Alg.~\ref{alg:sampling_with_test} runs at most $k$ for-loops and uses the membership test procedure only for deciding if the algorithm can be stopped earlier. Hence, we consider a variant, namely Alg.~\ref{alg:no_test}, which simply drops this test and always runs all $k$ loops. That is, the only difference between Algorithms \ref{alg:sampling_with_test} and \ref{alg:no_test}, respectively, is that the latter may run longer (but still at most $k$ loops) and outputs a superset $S'$ of the output $S$ of Alg.~\ref{alg:sampling_with_test}. Of course, if $S$ is a generating set for a $\cover$-covering subgroup, then this is certainly true for $S'$ as well. This shows that Alg.~\ref{alg:no_test} ``inherits'' the success probability of Alg.~\ref{alg:sampling_with_test}:
\begin{corollary}\label{cor:no_test}[Correctness of Algorithm~\ref{alg:no_test}]
With a probability of at least $\coversuccess$, the output $S$ of Algorithm \ref{alg:no_test} is a generating set for a $\cover$-covering subgroup.
\end{corollary}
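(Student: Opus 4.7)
The plan is to observe that Algorithm~\ref{alg:no_test} differs from Algorithm~\ref{alg:sampling_with_test} only in that it never invokes the early-abort test on line~\ref{algline:abort1}, and so simply continues extending the candidate set until all $k$ rounds have been performed. I would couple the two algorithms on the same randomness tape, i.e., feed them the same stream of samples produced by $\Sample$. Under this coupling, at every point in the execution the current candidate set $S$ in Algorithm~\ref{alg:sampling_with_test} is a subset of the current candidate set $S'$ in Algorithm~\ref{alg:no_test}: whenever Algorithm~\ref{alg:sampling_with_test} extends $S$, Algorithm~\ref{alg:no_test} extends $S'$ by exactly the same elements; whenever Algorithm~\ref{alg:sampling_with_test} aborts, Algorithm~\ref{alg:no_test} just keeps sampling and only adds further elements to $S'$.

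From this coupling it follows that $S\subset S'$ always, hence $\genby{S}\leq \genby{S'}$. Therefore
\[
\prob{x\fromdistr{\dist}\nsgroup}{x\in\genby{S'}}\;\geq\;\prob{x\fromdistr{\dist}\nsgroup}{x\in\genby{S}},
\]
so whenever $\genby{S}$ is a $\cover$-covering subgroup of $\nsgroup$, $\genby{S'}$ is $\cover$-covering as well. Consequently the event ``Algorithm~\ref{alg:no_test} outputs a generating set of a $\cover$-covering subgroup'' contains the event ``Algorithm~\ref{alg:sampling_with_test} outputs a generating set of a $\cover$-covering subgroup'', and since the latter has probability at least $\coversuccess$ by the preceding theorem, the same lower bound holds for the former.

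There is no real obstacle here; the only subtlety is to make the coupling argument explicit so that one really compares the outputs of the two algorithms on identical sample streams, rather than comparing them in distribution. Once that is stated, the conclusion is immediate and yields both Corollary~\ref{cor:no_test} and, since Algorithm~\ref{alg:no_test} performs exactly $\coversamples\cdot k+1$ invocations of $\Sample$, the sample-complexity claim of Theorem~\ref{th:main_arbitrary}.
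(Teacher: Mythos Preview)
Your proposal is correct and matches the paper's own argument essentially line for line: the paper also observes that Algorithm~\ref{alg:no_test} merely drops the early-abort test, so on the same samples its output $S'$ is a superset of Algorithm~\ref{alg:sampling_with_test}'s output $S$, and hence inherits the $\cover$-covering property whenever $S$ has it. Your explicit coupling on a common sample stream is a cleaner way of saying exactly what the paper sketches in one sentence.
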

Observe that Alg.~\ref{alg:no_test} simply outputs $\coversamples \cdot k + 1$ samples. Hence, the proof of Theorem \ref{th:main_arbitrary} is a direct consequence of Corollary~\ref{cor:no_test}. The remainder of this section is straightforward. Given  a generating set $S$ of a $\cover$-covering subgroup, we can apply Theorem~\ref{thm:membgen} in order to break the $\SMP$ for $(\group,\nsgroup)$.

\begin{theorem}[Quantum Attack on SMP with Arbitrary Sampling]\label{thm:final2}
Let $(\group,\nsgroup)$ be the output of $\gen(\secparam)$ with $k=\lceil \log_2(|\nsgroup|) \rceil$, for some security parameter $\secparam$. We denote the distribution of the sampling algorithm contained in the description of $\nsgroup$ by $\dist$. Let $0\leq \nepsilon^*\leq 1$ be an arbitrary fixed positive value.  Then, there exists a value $\coversamples = \coversamples(k,\nepsilon^*)$ (which only grows at most logarithmically in $k$) and a quantum algorithm which breaks the hardness of \SMP\ with probability at least $\left(1-\nepsilon^*\right)\left(1-\nepsilon\right)^2$ in time $o(\poly(k+\log_2(1/\nepsilon)))$, and by sampling only $\coversamples\cdot k+1$ times from $\nsgroup$ (where $\nepsilon$ is the error probability of Theorem~\ref{thm:ordgen}).

In particular, we can construct a quantum algorithm that breaks \SMP\ with probability at least $\frac{3}{4}(1-\nepsilon)^2$ in time $o(\poly(k+\log_2(1/\nepsilon)))$ while only sampling $7k\cdot \left(2+\lceil\log(k)\rceil\right)+1$ times from $\nsgroup$.
\end{theorem}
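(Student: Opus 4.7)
The plan is to combine Theorem~\ref{th:main_arbitrary} with Theorem~\ref{thm:membgen} in a direct way, and then pick parameters appropriately. First, given the sampling algorithm for $\nsgroup$ with unknown distribution $\dist$, I would invoke the procedure underlying Theorem~\ref{th:main_arbitrary} with parameters $\cover,\coversuccess$ to be fixed later; after $\coversamples\cdot k+1$ samples (with $\coversamples=\loosebound$) this produces, with probability at least $\coversuccess$, a set $S$ generating a $\cover$-covering subgroup $\nsgroup^*\leq\nsgroup$. I would then plug $S$ into the attack of Theorem~\ref{thm:membgen} essentially verbatim: run the order-finding algorithm of Theorem~\ref{thm:ordgen} on $S$ and on $S\cup\{z\}$, where $z$ is the SMP challenge, and decide $z\in\nsgroup$ iff the two orders match.

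The success analysis splits the randomness into three independent sources: the sampling of $S$, the two quantum order-finding runs, and the SMP coin $b$. If $b=1$, then $z\in\group\setminus\nsgroup$ and a fortiori $z\notin\nsgroup^*$ since $\nsgroup^*\leq\nsgroup$, so the orders differ and the decision is correct. If $b=0$, then $z\in\nsgroup$, and by the $\cover$-covering property $z\in\nsgroup^*$ with probability at least $\cover$, yielding equal orders and a correct decision. Multiplying the probabilities of the three independent good events --- $S$ generates a $\cover$-covering subgroup ($\geq\coversuccess$), both quantum runs succeed ($\geq(1-\nepsilon)^2$), and the covering event holds when $b=0$ (contributing the factor $\cover$) --- gives an overall success probability of at least $\coversuccess\cdot\cover\cdot(1-\nepsilon)^2$, matching the form of Theorem~\ref{thm:membgen} up to the sampling factor $\coversuccess$.

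For the first (general) assertion I would then choose $\cover,\coversuccess$ with $\cover\cdot\coversuccess\geq 1-\nepsilon^*$; for instance $\cover=\coversuccess=\sqrt{1-\nepsilon^*}$ works. Theorem~\ref{th:main_arbitrary} bounds the resulting $\coversamples=\coversamples(k,\nepsilon^*)$ by a quantity depending only logarithmically on $k$ (and independent of $\dist$), which feeds directly through the composition with Theorem~\ref{thm:ordgen} to yield the claimed running time and sample bound. For the explicit instantiation promising success $\frac{3}{4}(1-\nepsilon)^2$ with $7k(2+\lceil\log k\rceil)+1$ samples, I would fix a concrete pair $(\cover,\coversuccess)$ with $\cover\cdot\coversuccess=\frac{3}{4}$ (the symmetric choice $\cover=\coversuccess=\sqrt{3}/2$ is natural) and verify that $\loosebound\leq 7(2+\lceil\log k\rceil)$ uniformly in $k$. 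I expect the only genuine obstacle to be this last arithmetic check: confirming that the chosen $(\cover,\coversuccess)$ does make $\frac{\log(1-\coversuccess)-\log(k)}{\log(\cover)}$ fit under $7(2+\lceil\log k\rceil)$ for every $k\geq 1$. All remaining steps amount to a routine composition of the two preceding theorems.
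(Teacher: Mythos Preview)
Your proposal is correct and follows essentially the same approach as the paper: combine Theorem~\ref{th:main_arbitrary} with Theorem~\ref{thm:membgen}, choose $\cover=\coversuccess=\sqrt{1-\nepsilon^*}$ for the general statement and $\cover=\coversuccess=\sqrt{3}/2$ for the explicit one, and bound the success probability below by $\coversuccess\cdot\cover\cdot(1-\nepsilon)^2$. The paper's analysis is marginally more explicit in that it splits on the coin $b$ and first obtains $\frac{1}{2}(1-\nepsilon)^2+\frac{1}{2}\cover\coversuccess(1-\nepsilon)^2$ before simplifying to $\cover\coversuccess(1-\nepsilon)^2$, whereas you go directly to the product bound; both arrive at the same inequality, and the paper likewise just asserts the final arithmetic check $\coversamples\leq 7(\lceil\log k\rceil+2)$ without spelling it out.
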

\begin{proof}
In principle, the attacker $\adv$ is the same as described in Theorems~\ref{thm:membgen} and~\ref{thm:final}, the only difference being the approach for finding an appropriate generating set. Given the value $\nepsilon^*$, the attacker chooses two positive values $\cover,\coversuccess$ such that $\cover\cdot\coversuccess\geq (1- \nepsilon^*)$, for example $\cover=\coversuccess=\sqrt{1- \nepsilon^*}$. Then, the attacker makes $\coversamples\cdot k+1$ samples as explained in Theorem~\ref{th:main_arbitrary}. Let $\nsgroup^*$ denote the subgroup of $\nsgroup$ that is generated by these $\coversamples\cdot k+1$ samples. Due to Corollary~\ref{cor:no_test}, we know that $\nsgroup^*$ is a $\cover$-covering subgroup of $\nsgroup$ with probability $\coversuccess$. From this point on, the attack continues as specified in Theorem~\ref{thm:membgen}, while using the $\coversamples\cdot k+1$ samples as generators, i.e.,
%It remains to show that the success probability is at least $\left(1-\nepsilon^*\right)\left(1-\nepsilon\right)^2$. 
we let $z$ denote the challenge in the \SMP\ game (Def.~\ref{def:SMP}), so $z\in\group\setminus\nsgroup$ if $b=1$, and $z\in\nsgroup$ otherwise. If $b=1$ (which happens with probability $\frac{1}{2}$), we know that $z\not\in\nsgroup^*$ and the attacker $\adv$ will recognize this with probability $\geq (1-\nepsilon)^2$ (as in the proof of Theorem~\ref{thm:membgen}).  If $b=0$ (which also happens with probability $\frac{1}{2}$), several sub-cases do exist (depending on whether $\nsgroup^*$ is $\cover$-covering and  whether $z\in\nsgroup^*$). In case that both properties are true (which happens with probability $\geq \coversuccess\cdot\cover)$, the attacker recognizes that $z\in\nsgroup^*$ again with probability $\geq (1-\nepsilon)^2$. As the success probabilities in the other sub-cases are at least zero, it follows that
\begin{equation*}
\prob{}{\exp{\smp}{\adv,\gen}(\secparam)=1}\geq \frac{(1-\nepsilon)^2+\cover\coversuccess (1-\nepsilon)^2}{2}
\geq \cover\coversuccess(1-\nepsilon)^2
\geq  (1- \nepsilon^*)(1-\nepsilon)^2
\end{equation*}
%\begin{eqnarray*}
%\prob{}{\exp{\smp}{\adv,\gen}(\secparam)=1}&\geq& \left( (1-\nepsilon)^2+\cover\cdot\coversuccess\cdot (1-\nepsilon)^2\right)\cdot \frac{1}{2}\\
%&\geq& \left( \cover\cdot\coversuccess\cdot(1-\nepsilon)^2+\cover\cdot\coversuccess\cdot (1-\nepsilon)^2\right)\cdot \frac{1}{2}\\
%&=&\cover\cdot\coversuccess\cdot(1-\nepsilon)^2\\
%&\geq & (1- \nepsilon^*) \cdot(1-\nepsilon)^2
%\end{eqnarray*}
%
which concludes the proof of the first part of the Theorem. For the second part, we see that when choosing $\nepsilon^*=\frac{1}{4}$ and $\cover=\coversuccess=\frac{1}{2}\sqrt{3}$, the above attacker $\adv$ has a success probability of at least $\frac{3}{4}(1-\nepsilon)^2$ by sampling only $\coversamples \cdot k + 1$ times from $\nsgroup$ where $\coversamples = \loosebound \leq 7 \left( \lceil\log(k)\rceil+2 \right)$.
\qed
\end{proof}
Finally, Theorems~\ref{thm:final2} and~\ref{thm:cpacondition} together immediately imply our main result: the general impossibility of group homomorphic encryption in the quantum world, if the plaintext and ciphertext groups are abelian. 
\begin{theorem}[Impossibility of Group Homomorphic Encryption in the Quantum World]\label{thm:finalll}
Let $\encsch=(\keygen,\enc,\dec)$ be an \indcpa\ secure group homomorphic encryption scheme with abelian plaintext and ciphertext groups. Then, there exists a quantum PPT algorithm that breaks the security of $\encsch$ with non-negligible probability.
\end{theorem}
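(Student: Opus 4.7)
The plan is to combine the reduction from \indcpa\ security to the hardness of the Subgroup Membership Problem (Theorem~\ref{thm:cpacondition}) with the quantum attack on \SMP\ for arbitrary sampling distributions (Theorem~\ref{thm:final2}). Concretely, given a group homomorphic encryption scheme $\encsch=(\keygen,\enc,\dec)$ with abelian plaintext and ciphertext groups, I would instantiate the pair $(\group,\nsgroup)$ appearing in the definition of \SMP\ by the pair $(\allencs,\cipher{\neutr})$ induced by $\keygen$, and then apply the quantum attack to this pair.

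To carry this out, I first check that $(\allencs,\cipher{\neutr})$ satisfies the hypotheses needed to invoke Theorem~\ref{thm:final2}. Since the ambient ciphertext space $\ctexts$ is assumed abelian, so are its subgroups $\allencs$ and $\cipher{\neutr}$. By Fact~\ref{lemma:c0}, $\cipher{\neutr}$ is a proper, non-trivial, (normal) subgroup of $\allencs$. An efficient sampling algorithm for $\allencs\setminus\cipher{\neutr}$ is inherited from $\encsch$: sample any plaintext $m\neq\neutr$ from $\ptexts$, draw $r\output\Rnd$, and output $\enc_\pk(m;r)$; similarly, one samples from $\cipher{\neutr}$ by encrypting $\neutr$ with fresh randomness. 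No restriction is placed on the resulting distribution $\dist$ on $\cipher{\neutr}$, which is exactly the setting of Theorem~\ref{thm:final2}. Since the ciphertexts have polynomial-size representation in $\secparam$, we have $k=\lceil\log_2|\cipher{\neutr}|\rceil=\poly(\secparam)$, so the concrete bound $7k(2+\lceil\log k\rceil)+1$ samples stated in Theorem~\ref{thm:final2} remains polynomial in $\secparam$.

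Applying Theorem~\ref{thm:final2} to $(\allencs,\cipher{\neutr})$ with, say, $\nepsilon^*=1/4$ and $\nepsilon$ any fixed inverse polynomial in $\secparam$, yields a quantum PPT algorithm $\qadv$ that wins the \SMP\ experiment $\exp{\smp}{\qadv,\keygen}(\secparam)$ with probability at least $\tfrac{3}{4}(1-\nepsilon)^2$, i.e., with non-negligible advantage over $1/2$. By Theorem~\ref{thm:cpacondition}, the hardness of \SMP\ relative to $\keygen$ is a \emph{necessary} condition for the \indcpa\ security of $\encsch$; equivalently, its contrapositive states that any algorithm breaking the \SMP\ with non-negligible advantage can be turned into an algorithm breaking the \indcpa\ security with non-negligible advantage (the reduction in Appendix~\ref{apx:proofofcond} is black-box and preserves the quantum nature of $\qadv$). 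Composing $\qadv$ with this reduction gives the desired quantum PPT attacker against $\encsch$.

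The only delicate point, and the one I would treat most carefully, is the well-definedness of the \SMP\ instance $(\allencs,\cipher{\neutr})$: the scheme does not come with a membership test for $\allencs\subset\ctexts$, and Theorem~\ref{thm:final2} crucially relies on Algorithm~\ref{alg:no_test}, which \emph{drops} the need for a membership oracle. Thus the attack never asks whether a sampled ciphertext lies in $\allencs$ or $\cipher{\neutr}$; it merely feeds samples (produced by the encryption algorithm itself) into Watrous' quantum order-finding subroutine. I would end the proof by remarking that the asymptotic running time $o(\poly(k+\log_2(1/\nepsilon)))$ and polynomial sample complexity of Theorem~\ref{thm:final2}, together with the polynomial-time reduction in Theorem~\ref{thm:cpacondition}, give a quantum PPT adversary in the sense required by the statement.
\qed
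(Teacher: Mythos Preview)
Your proposal is correct and follows exactly the route the paper takes: the paper's ``proof'' of Theorem~\ref{thm:finalll} is the single sentence that it follows immediately from Theorems~\ref{thm:cpacondition} and~\ref{thm:final2}, and you have simply spelled out the verification that $(\allencs,\cipher{\neutr})$ meets the hypotheses of Theorem~\ref{thm:final2} and that the reduction of Theorem~\ref{thm:cpacondition} applies in the quantum setting. Your additional remark that Algorithm~\ref{alg:no_test} obviates the need for a membership test is a helpful clarification but not a departure from the paper's argument.
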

%This result shows 
%We point out that Theorem~\ref{thm:ordgen} actually holds for general solvable groups~\cite{watrous01qalggrp}, which is why we believe that also Theorem~\ref{thm:final2} can be extended to this more general case as well. However, we do not see any practical relevance of such an extension since one usually requires some form of commutativity in practical applications.
%

% ---

% --- Discussion
\section{Discussion}\label{sec:discussion}
In this section, we provide an informal discussion about the applicability of our quantum attack to non-group homomorphic encryption schemes and elaborate on fully homomorphic encryption (FHE).
%We also point out at a sufficient condition under which our attack works on FHE schemes.
In abstract terms, existing FHE schemes are standard public-key encryption schemes $\encsch=(\keygen,\enc,\dec)$ with the following extras~\cite{gentry09fhe}:
\begin{itemize}
%\item Plaintext space $(\ptexts,+,\cdot)$ and ciphertext space $(\ctexts,\oplus,\odot)$ are rings
\item the plaintext space $\ptexts$ and ciphertext space $\ctexts$ are rings,
%\item For all outputs $(\pk,\sk)\output\keygen(\secparam)$, the set of all (fresh) encryptions $\allencs=\{\enc_\pk(m;r)\mid m\in\ptexts, r\in\Rnd\}$ is a \emph{subset} (not necessarily a subgroup) of the ring $\ctexts$
%\item For all outputs $(\pk,\sk)\output\keygen(\secparam)$, all univariate polynomials $p=p(\secparam)$ in $\secparam$, and all multivariate polynomials 

\item there is an algorithm $\eval$ that takes as input a public key $\pk$, a circuit $C$, a tuple $(c_1,\ldots,c_t)$ of ciphertexts (one for every input node of $C$), and outputs another ciphertext $c$, and

\item for all outputs $(\pk,\sk)$ by $\keygen(\secparam)$, all polynomials $p(\secparam)$ in $\secparam$, all $t\leq\poly(\secparam)$, all plaintexts $m_1,\ldots,m_t\in\ptexts$ corresponding to fresh encryptions $c_i\fromdist\enc_\pk(m_i)$, $i=1\ldots t$, and all $t$-input circuits $C$ of depth $\leq p(\secparam)$, we have the following \emph{correctness} condition:
\begin{equation}\label{eqn:correctness}
\dec_\sk(\eval_\pk(C,c_1,\ldots,c_t))=C(m_1,\ldots,m_t).
\end{equation}
\end{itemize}
Homomorphic encryption schemes for which the polynomial depth $p(\secparam)$ of the circuits $C$ is bounded a priori (i.e., fixed in the public key $\pk$) are called \emph{leveled} FHE. For very small polynomials $p(\secparam)$, we say that the scheme is \emph{somewhat} homomorphic. At a first glance, there a two main differences to the notion of \emph{group} homomorphic encryption (see Fig.~\ref{fig:grphom} for a pictorial explanation):
\begin{enumerate}
\item The set of all (fresh) encryptions $\allencs=\{\enc_\pk(m;r)\mid m\in\ptexts, r\in\Rnd\}$ is only a \emph{subset} (and not necessarily a subgroup) of the ring $\ctexts$.

\item The decryption is not necessarily a group homomorphism as it is only guaranteed to run correctly with circuits that are polynomially bounded in depth; this polynomial bound can be dynamically chosen in the ``pure'' FHE case, while it is fixed in the public key for leveled FHE and somewhat homomorphic schemes. But if the decryption is group homomorphic, it particularly must run correctly (at least theoretically) on all unbounded circuits consisting only of group-operation gates.
\end{enumerate}
\begin{figure}[h]
\begin{center}
%\framebox[\textwidth]{\includegraphics[width=.75\linewidth]{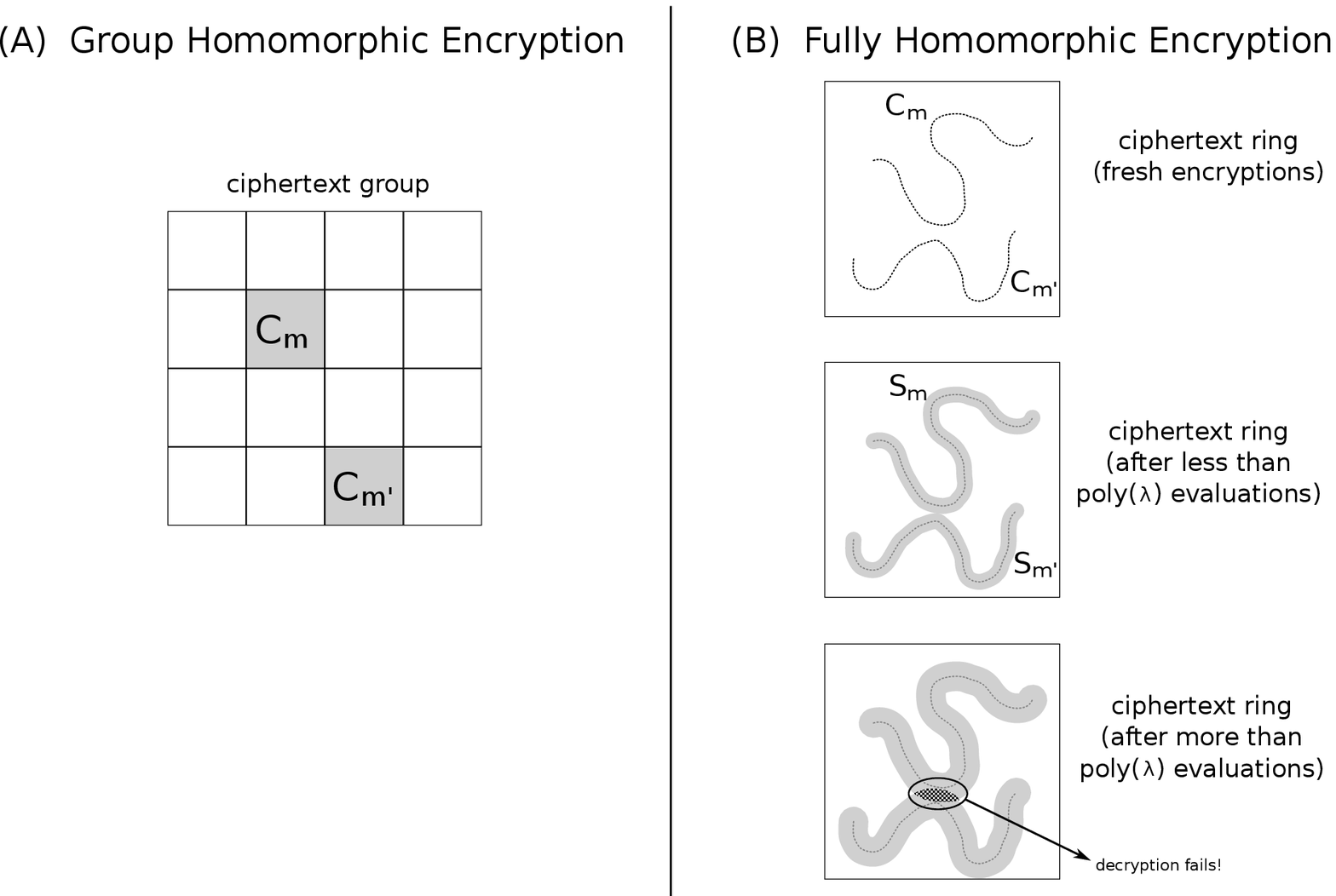}}
\framebox[.8\textwidth]{\includegraphics[width=.75\linewidth]{pic1}}
\end{center}
\vspace{-.5cm}
\caption{Differences between group homomorphic encryption and FHE: (A) shows that each $\cipher{m}$ is a coset of $\cipher{1}$ in $\cipher{}$ (Fact~\ref{lemma:c0}), while the decryption is a group homomorphism; (B) shows first that $\cipher{m}$ and $\cipher{m'}$ are subsets and not necessarily cosets in $\cipher{}$, second that the decryption runs correctly on $\poly(\secparam)$ evaluations of ciphertexts, and third that the decryption might fail if exponentially many evaluations have been performed, meaning that the decryption is not necessarily group homomorphic.}\label{fig:grphom}
\end{figure}
If the decryption is not a group homomorphism, the set of fresh encryptions of the neutral element in $\ptexts$ is not necessarily a group, but only a subset of $\ctexts$. However, the quantum order-finding algorithm of Theorem~\ref{thm:ordgen} only works on (solvable) \emph{groups}. This immediately gives us the first important observation:
\begin{observation}
Our quantum attack from Section~\ref{sec:quantum} on group homomorphic encryption schemes is \emph{not} immediately applicable to more general homomorphic encryption schemes, such as somewhat and (leveled) FHE schemes.
\end{observation}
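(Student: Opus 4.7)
The plan is to argue, by pinpointing exactly which structural hypothesis of the attack breaks down, that the quantum procedure of Section~\ref{sec:quantum} does not mechanically carry over to somewhat/leveled FHE schemes. The attack in Section~\ref{sec:quantum} is not a direct attack on \indcpa\ security; rather, it is a two-layer construction: (i) Theorem~\ref{thm:cpacondition} reduces \indcpa\ security to the hardness of \SMP\ for the pair $(\allencs,\cipher{\neutr})$, and (ii) Theorems~\ref{thm:final} and~\ref{thm:final2} break \SMP\ by applying Watrous' order-finding algorithm (Theorem~\ref{thm:ordgen}) to subgroups generated by samples. The proof proposal will therefore isolate the two structural assumptions that both layers rely on, and show that neither is guaranteed in the FHE setting.

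First, I would recall Fact~\ref{lemma:c0}, which is the backbone of layer (i): it asserts that $\cipher{\neutr}$ is a proper normal subgroup of $\allencs$ and that each $\cipher{m}$ is a coset, so that \indcpa\ reduces cleanly to deciding subgroup membership. The proof that Fact~\ref{lemma:c0} holds uses in an essential way the fact that $\dec_\sk$ is a group homomorphism on all of $\allencs$, i.e., compatible with \emph{unbounded} iterations of the group operation. In FHE, as stated in the excerpt and depicted in Fig.~\ref{fig:grphom}(B), correctness of $\dec_\sk \circ \eval_\pk$ only holds for circuits of depth $\leq p(\secparam)$. I would then observe that this bounded-depth correctness is strictly weaker than the homomorphism property, so one cannot conclude that $\cipher{\neutr}$ is closed under $\op$ (the product of two fresh encryptions of $\neutr$ may lie outside $\allencs$, or may decrypt correctly only up to a certain depth). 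Hence $\cipher{\neutr}$ is, in general, merely a \emph{subset} of the ring $\ctexts$, not a subgroup, and Fact~\ref{lemma:c0} does not apply.

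Next, I would turn to layer (ii) and note that every theorem invoked there (Theorems~\ref{thm:ordgen}, \ref{thm:membgen}, \ref{thm:unifsampl}, \ref{th:main_arbitrary}, \ref{thm:final}, \ref{thm:final2}) takes as input a \emph{finite group} $\nsgroup$ together with a sampling algorithm, and manipulates orders of subgroups $\genby{x_1,\ldots,x_r}\leq\nsgroup$. In the FHE setting $\cipher{\neutr}$ is not even known to be closed under the ciphertext-ring multiplication, so the symbol $\genby{x_1,\ldots,x_r}$ has no meaningful interpretation inside $\cipher{\neutr}$. One could formally embed the samples into the ambient group of units of $\ctexts$ (if such a group exists and is efficiently described), but then Watrous' algorithm would compute the order of a subgroup of that ambient group, not of $\cipher{\neutr}$; comparing the two orders as in the proof of Theorem~\ref{thm:membgen} no longer discriminates between $b=0$ and $b=1$, because adjoining a challenge $z$ drawn from the ambient group changes the order in ways that are decoupled from membership in $\cipher{\neutr}$.

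The main obstacle, and therefore the heart of the argument, is to express these two failures as a single rigorous statement rather than a list of caveats. The cleanest way is to write the observation in contrapositive form: if the attack of Section~\ref{sec:quantum} were applicable verbatim to a scheme $\encsch$, then $(\allencs,\cipher{\neutr})$ would have to form a pair (group, subgroup) with an efficient sampler, and the decryption would have to be a homomorphism on $\allencs$ in order for the \SMP-to-\indcpa\ reduction of Theorem~\ref{thm:cpacondition} to go through; but this is exactly Definition~\ref{def:grouphom}, contradicting the assumption that $\encsch$ is not group homomorphic. The observation is therefore essentially a restatement of where each prerequisite of the chain enters, and the proposal is to lay out this chain explicitly and mark the precise links that break for somewhat/leveled FHE.
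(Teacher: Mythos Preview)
Your proposal is correct and lands on the same essential point as the paper: the quantum order-finding of Theorem~\ref{thm:ordgen} requires a genuine \emph{group}, whereas in somewhat/leveled FHE the set $\cipher{\neutr}$ is only a subset of $\ctexts$ because $\dec_\sk$ is not a homomorphism. The paper's own justification is in fact just the one-sentence remark immediately preceding the Observation; it does not give a separate proof.

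Where you go further than the paper is in dissecting both layers. The paper pins the failure solely on layer~(ii) --- order-finding needs a group --- and leaves the breakdown of layer~(i) implicit in Fig.~\ref{fig:grphom}(B). Your explicit treatment of why Fact~\ref{lemma:c0} and hence Theorem~\ref{thm:cpacondition} also fail, and your contrapositive framing (``if the attack applied verbatim, the scheme would satisfy Definition~\ref{def:grouphom}''), are sound additions that make the informal Observation more rigorous than the paper itself does. Just be aware that for what the paper treats as a one-line remark, the level of detail you propose is more than is strictly required; your two-layer analysis is an elaboration rather than a genuinely different argument.
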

A sufficient condition that we need a homomorphic scheme to have for our quantum attack to work is the following:

\medskip
\noindent{\bf Sufficient Condition (Quantum Attack).} For any output $(\pk,\sk)$ by $\keygen(\secparam)$, there exist two plaintexts $m,m'\in\ptexts$ and a subgroup $\group$ of $\ctexts$ such that 
\begin{enumerate}
\item there exists an efficient PPT algorithm which outputs a generating set for $\group$ of size at most $\poly(\secparam)$,

%\item there exists a polynomial $p=p(\secparam)$ with
%\[\prob{c_1,\ldots,c_p\fromdist\enc_\pk(m)}{\genby{c_1,\ldots,c_p} \geq \group}\text{ is non-negligible in }\secparam,\]

\item the probability $\prob{c\fromdist\enc_\pk(m)}{c\in\group}$ is non-negligible in $\secparam$, and

\item the probability $\prob{c'\fromdist\enc_\pk(m')}{c' \notin \group}$ is non-negligible in $\secparam$.
\end{enumerate}
In the setting of group homomorphic encryption schemes, the plaintext $m$ would be the neutral element $1$, while $m'\neq 1$ can be any other plaintext. The group $\group$ satisfying the above conditions would be a $\delta$-covering subgroup of the group $\cipher{1}$ of all (fresh) encryptions of $1$, for a sufficiently small $\delta$.
%Figure~\ref{fig:grphom} depicts the fact that all requirements for our quantum attack to work are satisfied by any group homomorphic encryption scheme.
For more general homomorphic encryption schemes, such as somewhat or (leveled) FHE schemes, the situation looks more like in Fig.~\ref{fig:fhe}.
\begin{figure}[h]
\begin{center}
%\framebox[\textwidth]{\includegraphics[width=.75\linewidth]{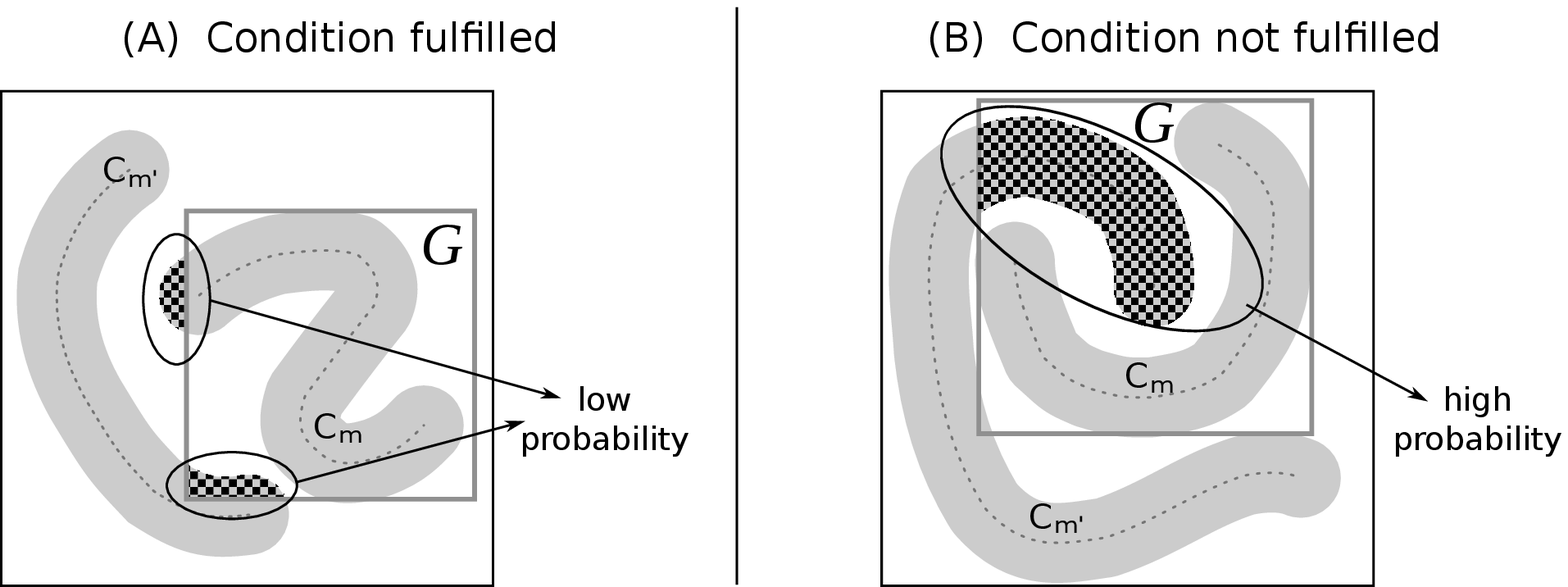}}
\framebox[.8\textwidth]{\includegraphics[width=.75\linewidth]{pic2}}
\end{center}
\vspace{-.5cm}
\caption{Our condition in the FHE case: (A) shows pictorially when the condition is fulfilled; (B) shows the case when item 3 of the condition is not met and $G$ intersects with a large part of encryptions of $m'$.}\label{fig:fhe}
\end{figure}

\noindent The important observation here is, that as long as only polynomially many evaluations of the ciphertexts have been performed, the decryption still runs correctly (cf.\ correctness condition in Equation~\eqref{eqn:correctness}). But for any scheme to be \indcpa\ secure, the set of encryptions of a given message $m$ must be exponentially large, so in particular, a group $\group$ that fulfills condition 2 is required to be exponentially large. Hence, the decryption is not guaranteed to run correctly on $\group$ and might fail. More precisely, condition 3 for our attack to work will most likely be unsatisfied. However, proving or disproving that any of the existing somewhat or (leveled) FHE schemes satisfies our sufficient condition is a highly non-trivial task (due to the very general and abstract nature of the requirement) and lies outside the scope of this work. We leave it as interesting future work. Interestingly enough, since most of the existing FHE schemes base their security on supposedly quantum-resistant hardness assumptions (such as LWE), spotting a scheme that is susceptible to our quantum attack will effectively break the underlying hardness assumption and thereby disprove its quantum-resistance.

% ---

% --- Conclusion
%\section{Conclusion}\label{sec:conclusion}
%\input{conclusion}
% ---

% --- Acknowledgements -- so that we don't forget
\subsubsection*{Acknowledgements.}
%We would like to thank Richard Lindner for pointing out the potential applicability of Shor's algorithm to the general SMP. Moreover, we thank Pooya Farshim for very helpful discussions on the `computationally observable subgroup' in the IND-CPA game. Finally, we are very grateful for the constructive comments by the anonymous reviewers. Tommaso Gagliardoni is supported by the German Federal Ministry of Education and Research (BMBF) within EC-SPRIDE (www.ec-spride.de). This work was also supported by CASED (www.cased.de).
We would like to thank Richard Lindner and Pooya Farshim for helpful discussions. We are also grateful for the constructive comments by the anonymous reviewers.% Test~\cite{fullversion}.

% ---

% --- References
\bibliographystyle{splncs03}
\bibliography{main}
% ---
%\pagebreak

% --- Appendix
\appendix
\section{Proof of Theorem~\ref{thm:cpacondition}}\label{apx:proofofcond}
We prove the theorem by contradiction and show that if we have a PPT algorithm $\adv$ that breaks the hardness of $\SMP$ with non-negligible advantage $\nnegl(\secparam)$, we can construct (in PPT) an algorithm $\bdv$ that breaks the \indcpa\ security with non-negligible advantage $\nnegl(\secparam)$. To this end, we fix an $\SMP$-adversary $\adv$ and construct an \indcpa -adversary $\bdv=(\bdv_1,\bdv_2)$.

We start by letting $\bdv_1$ choose $m_0=1\in\ptexts$ and a random message $m_1\output\ptexts$ with $m\neq 1$. Next, 
%After setting $m_1:=m_0\cdot m$, 
$\bdv_1$ sends the two messages $m_0,m_1$ to the \indcpa -challenger. The challenger chooses a random bit $b\in\{0,1\}$ and returns the ciphertext $c\output\enc_\pk(m_b)$. Then, $\bdv_2$ simply relays the ciphertext $c$ to the $\SMP$-adversary $\adv$ who will output a bit $d\in\{0,1\}$,
%uses this ciphertext and computes the ciphertext $c':=\enc_\pk(m_0^{-1};\rho)\cdot c\in\cipher{}$ for a random $\rho\output\Rnd$. The ciphertext $c'$ is then send to the $\SMPS$-adversary $\adv$ who will output a bit $d\in\{0,1\}$,
which in turn is forwarded by $\bdv_2$ to the \indcpa -challenger.

It remains to be shown that $d=b$ with a non-negligible advantage, i.e., that
\[\left|\prob{}{\indexp{\cpa}{\bdv,\keygen}(\secparam)=1}-\frac{1}{2}\right|\text{ is non-negligible}.\]
By the assumption on $\adv$, we know that $\adv$'s advantage is non-negligible, namely $\nnegl(\secparam)$. Moreover, the ciphertext $c$ is formatted as in $\SMP$ so $\adv$ behaves as in the $\SMP$-game (it is either a fresh encryption of $1$ or of a random message different from 1), meaning that $d=b$ with $\adv$'s advantage $\nnegl(\secparam)$, i.e.,
\[\left|\prob{}{\indexp{\cpa}{\bdv,\keygen}(\secparam)=1}-\frac{1}{2}\right|=\left|\prob{}{\exp{\smp}{\adv,\keygen}(\secparam)=1}-\frac{1}{2}\right|=\nnegl(\secparam).\]
This concludes the proof of the Theorem.\qed

\section{Example: Hardness of SMP does NOT imply IND-CPA Security}\label{apx:example}
We construct a \GHE\ scheme that is \emph{not} \indcpa\ secure, but whose corresponding \SMP\ is hard. In a nutshell, the idea is to start with a IND-CPA secure scheme but to change the encryption process as follows. For a fixed message $m^*\neq 1$ the encryption process becomes deterministic for a significant probability (e.g., 1/2). An IND-CPA attacker can misuse this to easily distinguish encryptions of $m^*$ from other ciphertexts. However, if the plaintext space is sufficiently large, the probability that the SMP-sampling algorithm chooses $m^*$ is negligible, leaving the SMP still hard.

More precisely, let $\encsch=(\keygen,\enc,\dec)$ be an \indcpa\ secure \GHE\ scheme with a plaintext group $\ptexts$ that is exponentially large in the security parameter such that the sampling algorithm, contained in the description of $\ptexts$, samples according to the uniform distribution---for instance, this property is satisfied by the ElGamal cryptosystem~\cite{elgamal85elgamal}. By Theorem~\ref{thm:cpacondition}, we know that the \SMP\ relative to $\keygen$ is hard. Now, the idea is to slightly modify $\encsch$ such that the corresponding \SMP\ remains hard but the \indcpa\ security can be easily broken. Therefore, we fix a public value $r^*\in\Rnd$, a public message $m^*\in\ptexts\setminus\{1\}$, and construct a scheme $\encsch^*$ which is exactly the \emph{same} as $\encsch$, except for the encryption algorithm. We denote the encryption algorithm of $\encsch^*$ by $\enc^*$ and define it as follows:

\noindent{\bf Encryption.} $\enc^*$ takes the public key $\pk$, a message $m$, and a random value $r\in\Rnd$ as input. Furthermore, it uniformly samples a random bit $b^*\in \{0,1\}$. The output is defined as follows:
\[\enc^*_\pk(m;r):=
\left\{ \begin{array}{ll}
\enc_\pk(m;r^*)   & \mbox{, if } m=m^*\ \mathrm{and}\ b^*=0 \\
\enc_\pk(m;r)   & \mbox{, if } m=m^*\ \mathrm{and}\ b^*=1 \\
\enc_\pk(m;r) & \mbox{, otherwise.}
\end{array}\right.\]
Recall that $r^*\in\Rnd$ and $m^*\in\ptexts$ are fixed and public values corresponding to $\encsch^*$.

Our new scheme $\encsch^*$ certainly is \emph{not} \indcpa\ secure: Assume an adversary chooses two messages $m_0,m_1\in\ptexts$ where $m_0=m^*$. Upon the retrieval of an encryption $c$ of either of the two messages, the adversary checks whether $c=\enc_\pk(m;r^*)$. If so, she knows that $m_0$ was encrypted. Otherwise she assumes that $c$ is an encryption of message $m_1$. Her advantage is $1/4$.

On the other hand, we see that the \SMP\ corresponding to $\encsch^*$ is still hard: Recall that in the \SMP\ game, the challenger flips a coin $b\in\{0,1\}$. If $b=1$, the challenger samples a randomly chosen message $m\fromunif\ptexts$ with $m\neq 1$ (recall that sampling from $\ptexts$ is done according to the uniform distribution) and sends $c=\enc^*_\pk(m)$ to an \SMP -adversary. If $b=0$, the challenger simply sends $c=\enc^*_\pk(1)$ to the adversary. It is obvious that this \SMP\ instance (using $\enc^*$) behaves exactly in the same way as our orginial \SMP\ game (with $\enc$) corresponding to $\encsch$ if $b=0$. But also if $b=1$, it is clear that the advantage of an adversary in the \SMP\ with $\enc^*$ is negligibly close to the advantage of an adversary in the \SMP\ with $\enc$. This is due to the fact that the plaintext space is exponentially large in the security parameter and the particular message $m^*$ will only be chosen with a negligible probability. Therefore, the two games \SMP\ with $\enc^*$ and \SMP\ with $\enc$ are computationally indistinguishable, and so our \SMP\ corresponding to $\encsch^*$ is hard.
%\section{Relations to Matrices over $\gf{2}$}\label{sec:appendix_a}
%\input{appendix_a}
%\section{Proof of Lemma~\ref{lemma:gamehop}}\label{sec:prooflemmagamehop}
%\input{lemmagamehop}
% ---

\end{document}